\newcommand{\ie}{i.\,e.\xspace}
\newcommand{\eg}{e.\,g.\xspace}
\newcommand{\vd}{$\mathit{VD}$\xspace}
\newcommand{\rk}{\textsf{RK}\xspace}
\newcommand{\ia}{\textsf{IA}\xspace}
\newcommand{\iaw}{\textsf{IAW}\xspace}
\newcommand{\da}{\textsf{DA}\xspace}
\newcommand{\daw}{\textsf{DAW}\xspace}
\newcommand{\vda}{$\tilde{\mathit{VD}}$\xspace}
\newcommand{\upvd}{\textsf{updateApprVD-W}\xspace}
\newcommand{\sssp}{\textsf{updateSSSP-W}\xspace}
\newcommand{\upvdu}{\textsf{updateApprVD-U}\xspace}
\newcommand{\ssspu}{\textsf{updateSSSP-U}\xspace}
\begin{document}

\author{Elisabetta Bergamini \and Henning Meyerhenke}
\institute{Institute of Theoretical Informatics \\ Karlsruhe Institute of Technology (KIT), Germany \\
  Email: \email{\{elisabetta.bergamini, meyerhenke\}\,@\,kit.edu}
}

\title{Fully-dynamic Approximation \\ of Betweenness Centrality}

\date{}

\maketitle

\begin{abstract}
Betweenness is a well-known centrality measure that ranks the nodes of a network according to their participation in shortest paths.
Since an exact computation is prohibitive in large networks, several approximation algorithms have been proposed. Besides that, recent years
have seen the publication of dynamic algorithms for efficient recomputation of betweenness in evolving networks. In previous work  we
proposed the first semi-dynamic algorithms that recompute an \emph{approximation} of betweenness in connected graphs after batches of edge insertions. 

In this paper we propose the first fully-dynamic approximation algorithms (for weighted and unweighted undirected graphs that need not to be connected) with a provable guarantee on the maximum approximation error.
The transfer to fully-dynamic and disconnected graphs implies additional algorithmic problems that could be of independent interest. In particular, we propose a new upper bound on the vertex diameter for weighted undirected graphs. For both weighted and unweighted graphs, we also propose the first fully-dynamic algorithms that keep track of this upper bound.
In addition, we extend our former algorithm for semi-dynamic BFS to batches of both edge insertions and deletions. 

Using approximation, our algorithms are the first to make in-memory computation of betweenness in fully-dynamic networks with millions of edges feasible. Our experiments show that they can achieve substantial speedups compared to recomputation, up to 
several orders of magnitude.\\[0.25ex]
\noindent \textbf{Keywords:}  betweenness centrality, algorithmic network analysis, fully-dynamic graph algorithms, approximation algorithms, shortest paths
\end{abstract}

% page limit for ESA: 12 pages including title page and bibliography, following the original LNCS style

%%%%%%%%%%%%%%%%%%%%%%%%%%%%%%%%%
\section{Introduction}
\label{sec:intro}
The identification of the most central nodes of a network is a fundamental problem in network analysis. \emph{Betweenness
centrality} (BC) is a well-known index that ranks the importance of nodes according to their participation in \textit{shortest paths}. Intuitively, a node has 
high BC when it lies on many shortest paths between pairs of other nodes.
Formally, BC of a node $v$ is defined as $c_B(v) = \frac{1}{n(n-1)} \sum_{s \neq v \neq t} \frac{\sigma_{st}(v)}{\sigma_{st}}$, where $n$ is the number of nodes,
$\sigma_{st}$ is the number of shortest paths between two nodes $s$ and $t$ and $\sigma_{st}(v)$ is the number of these paths that go through node $v$.
Since it depends on \textit{all} shortest paths, the exact computation of BC is expensive: the best known 
algorithm~\cite{Brandes01betweennessCentrality} is quadratic in the number of nodes for sparse networks and cubic for dense networks, prohibitive for networks with hundreds of thousands of nodes. Many graphs of interest, however, such as web 
graphs or social networks, have millions or even billions of nodes and edges. 
For this reason, approximation algorithms~\cite{DBLP:journals/ijbc/BrandesP07,DBLP:conf/alenex/GeisbergerSS08,DBLP:conf/waw/BaderKMM07} must be used in practice. In addition, many
large graphs of interest evolve continuously, making the efficient recomputation of BC a necessity.
In a previous work, we proposed the first two approximation algorithms~\cite{DBLP:conf/alenex/BergaminiMS15} (\ia for unweighted and \iaw for weighted graphs) 
that can efficiently recompute the approximate BC scores after batches of edge insertions or weight decreases. 
\ia and \iaw are the only semi-dynamic algorithms that can actually be applied to large networks. 
The algorithms build on \rk~\cite{DBLP:conf/wsdm/RiondatoK14}, a static algorithm with a theoretical guarantee on the quality of the approximation, and inherit this guarantee from \rk. 
However, \ia and \iaw target a relatively restricted configuration: only connected graphs and edge insertions/weight decreases.
 \vspace{-1ex}
\paragraph{Our contributions.}
In this paper we present the first fully-dynamic algorithms (handling edge insertions, deletions and arbitrary weight updates) for BC approximation in weighted and unweighted undirected graphs. Our algorithms extend the semi-dynamic ones we presented in~\cite{DBLP:conf/alenex/BergaminiMS15}, while keeping the theoretical guarantee on the maximum approximation error.
The transfer to fully-dynamic and disconnected graphs implies several additional problems compared to the restricted case we considered previously~\cite{DBLP:conf/alenex/BergaminiMS15}.
Consequently, we present the following intermediate results, all of which could be of independent interest.
(i) We propose a new upper bound on the vertex diameter \vd (\ie number of nodes in the shortest path(s) with the maximum number of nodes) for weighted undirected graphs. This can improve significantly the one used in the \rk algorithm~\cite{DBLP:conf/wsdm/RiondatoK14} if the network's weights vary in relatively small ranges (from the size of the largest connected component to at most twice the vertex diameter times the ratio between the maximum and the minimum edge weights).
(ii) For both weighted and unweighted graphs, we present the first fully-dynamic algorithm for updating an approximation of \vd, which is equivalent to the diameter in unweighted graphs.
(iii) We extend our previous semi-dynamic BFS algorithm~\cite{DBLP:conf/alenex/BergaminiMS15} to batches of both edge insertions and deletions.
In our experiments, we compare our algorithms to recomputation with \rk on both synthetic and real dynamic networks. Our results show that our algorithms can achieve substantial speedups, often several orders of magnitude on single-edge updates and are always faster than recomputation on batches of more than 1000 edges. 
%%%%%%%%%%%%%%%%%%%%%%%%%%%%%%%%%
\section{Related work}
\label{sec:related_work}
\subsection{Overview of algorithms for computing BC}
The best static exact algorithm for BC (\textsf{BA}) is due to Brandes~\cite{Brandes01betweennessCentrality} and requires $\Theta(nm)$ operations for unweighted graphs
and $\Theta(nm+n^{2}\log n)$ for graphs with positive edge weights. The algorithm computes a single-source shortest path (SSSP) search from every node $s$ in the graph and adds to the BC score of each node $v \neq s$ the fraction of shortest paths that go through $v$. Several static approximation algorithms have been proposed that compute an SSSP search from a set of randomly chosen nodes and extrapolate the BC scores of the other nodes~\cite{DBLP:journals/ijbc/BrandesP07,DBLP:conf/alenex/GeisbergerSS08,DBLP:conf/waw/BaderKMM07}. The static approximation algorithm by Riondato and Kornaropoulos (\rk)~\cite{DBLP:conf/wsdm/RiondatoK14} samples a set of shortest paths and adds a contribution to each node in the sampled paths. This approach allows a theoretical guarantee on the quality of the approximation and will be described in Section~\ref{sec:rk}.
Recent years have seen the publication of a few dynamic exact algorithms~\cite{DBLP:conf/www/LeeLPCC12,DBLP:conf/socialcom/GreenMB12,kourtellis2014scalable,DBLP:journals/snam/KasCC14,DBLP:conf/mfcs/NasrePR14,DBLP:conf/waw/GoelSIS13}. Most of them
store the previously calculated BC values and additional information, like the distance
of each node from every source, and try to limit the recomputation to the nodes whose
BC has actually been affected. All the dynamic algorithms perform better than
recomputation on certain inputs. Yet, none of them
is in general better than \textsf{BA}. In fact, they
all require updating an all-pairs shortest paths (APSP) search, for which no algorithm has an improved worst-case complexity compared to the best static algorithm~\cite{DBLP:journals/algorithmica/RodittyZ11}. 
Also, the scalability of the dynamic exact BC algorithms is strongly compromised by their memory requirement of $\Omega (n^2)$. To overcome these problems, we presented two algorithms that efficiently recompute an approximation of the BC scores instead of their exact values~\cite{DBLP:conf/alenex/BergaminiMS15}. 
The algorithms have shown significantly high speedups compared to recomputation with \rk and a good scalability, but they are limited to connected graphs and batches of edge insertions/weight decreases (see Section~\ref{sec:bms}).

\subsection{\rk algorithm}
\label{sec:rk}
The static approximation algorithm \rk~\cite{DBLP:conf/wsdm/RiondatoK14} is
the foundation for the incremental approach we presented in~\cite{DBLP:conf/alenex/BergaminiMS15} and our new fully-dynamic approach. \rk samples a set $S =\{p_{(1)},...,p_{(r)}\}$ of $r$ shortest paths between randomly-chosen source-target pairs $(s, t)$.
Then, \rk computes the approximated
betweenness $\tilde{c}_B(v)$ of a node $v$ as the
fraction of sampled paths $p_{(k)}\in S$ that go through $v$, by adding $\frac{1}{r}$ to $v$'s score for each of these paths.
In each of the $r$ iterations, the probability of a shortest path $p_{st}$ to be sampled is $\pi_{G}(p_{st})=\frac{1}{n(n-1)}\cdot\frac{1}{\sigma_{st}}$. The number $r$ of samples required to approximate the BC scores with the given error guarantee is $r=\frac{c}{\epsilon^{2}}\left(\lfloor\log_{2}\left(\mathit{VD}-2\right)\rfloor+1+\ln\frac{1}{\delta}\right)$, where $\epsilon$ and $\delta$ are constants in $(0,1)$ and $c \approx 0.5$.
Then, if $r$ shortest paths
are sampled according to $\pi_{G}$, with probability at least $1-\delta$ the approximations
$\tilde{c}_B(v)$ are within $\epsilon$
from their exact value: $ \Pr(\exists v\in V\: s.t.\:|c_{B}(v)-\tilde{c}_B(v)|>\epsilon)<\delta. $
To sample the shortest paths according to $\pi_{G}$, \rk first chooses
a source-target node pair $(s,t)$ uniformly at random and performs a shortest-path search (Dijkstra or BFS) from $s$ to $t$, keeping also track of the number $\sigma_{sv}$
of shortest paths between $s$ and $v$ and of the list of predecessors
$P_{s}(v)$ (\ie the nodes that immediately precede $v$ in the shortest paths between $s$ and $v$) for any node $v$ between $s$ and $t$. Then one shortest path is selected: 
starting from $t$, a predecessor $z\in P_{s}(t)$
is selected with probability $\sigma_{sz}/\sum_{w\in P_{s}(t)}\sigma_{sw}=\sigma_{sz}/\sigma_{st}$.
The sampling is repeated iteratively
until node $s$ is reached.
\paragraph{Approximating the vertex diameter.}
\rk uses two upper bounds on \vd that can be both computed in $O(n+m)$. For unweighted undirected graphs, it samples a source node $s_i$ for each connected component of $G$, computes a BFS from each $s_i$ and sums the two shortest paths with maximum length starting in $s_i$. The \vd approximation is the maximum of these sums over all components. For weighted graphs, \rk approximates \vd with the size of the largest connected component, which can be a significant overestimation for complex networks, possibly of orders of magnitude. In this paper, we present a new approximation for weighted graphs, described in Section~\ref{sec:new_vd_approx}.
\subsection{\textsf{IA} and \textsf{IAW} algorithms}
\label{sec:bms}
\ia and \iaw are the incremental approximation algorithms (for unweighted and weighted graphs, respectively) that we presented previously~\cite{DBLP:conf/alenex/BergaminiMS15}. The algorithms are based on the observation that if only edge insertions are allowed and the graph is connected, \vd cannot increase, and therefore also the number $r$ of samples required by \rk for the theoretical guarantee. Instead of recomputing $r$ new shortest paths after a batch of edge insertions, \ia and \iaw \textit{replace} each old shortest path $p_{s,t}$ with a new shortest path between the same node pair $(s,t)$. In \iaw the paths are recomputed with a slightly-modified \textsf{T-SWSF}~\cite{DBLP:conf/wea/BauerW09}, whereas \ia uses a new semi-dynamic BFS algorithm.
The BC scores are updated by subtracting $1/r$ to the BC of the nodes in the old path and adding $1/r$ to the BC of nodes in the new shortest path.

\subsection {Batch dynamic SSSP algorithms}
\label{sssp_update}
Dynamic SSSP algorithms recompute distances from a source node after a single edge update or a batch of edge updates.
Algorithms for the batch problem have been published~\cite{Ramalingam92anincremental,Frigioni_semi-dynamicalgorithms,DBLP:conf/wea/BauerW09} and compared in experimental studies~\cite{DBLP:conf/wea/BauerW09,DBLP:conf/wea/DAndreaDFLP14}.
The experiments show that the tuned algorithm \textsf{T-SWSF} presented in~\cite{DBLP:conf/wea/BauerW09} performs well on many types of graphs and edge updates. For batches of only edge insertions in unweighted graphs, we developed an algorithm asymptotically faster than \textsf{T-SWSF}~\cite{DBLP:conf/alenex/BergaminiMS15}. The algorithm is in principle similar to \textsf{T-SWSF}, but has an improved complexity thanks to different data structures.

%%%%%%%%%%%%%%%%%%%%%%%%%%%%%%%%%%%%%%
\section{New \vd approximation for weighted graphs}
\label{sec:new_vd_approx}
Let $G$ be an undirected graph. For simplicity, let $G$ be connected for now. If it is not, we compute an approximation for each connected component and take the maximum over all the approximations. Let $T \subseteq G$ be an SSSP tree from any source node $s \in V$. Let $p_{xy}$ denote a shortest path between $x$ and $y$ in $G$ and let $p_{xy}^T$ denote a shortest path between $x$ and $y$ in $T$. Let $|p_{xy}|$ be the number of nodes in $p_{xy}$ and $d(x,y)$ be the distance between $x$ and $y$ in $G$, and analogously for $|p_{xy}^T|$ and $d^T(x,y)$. Let $\overline{\omega}$ and  $\underline{\omega}$ be the maximum and minimum edge weights, respectively. Let $u$ and $v$ be the nodes with maximum distance from $s$, \ie $d(s, u)\geq d(s,v) \geq d(s, x)\  \forall x\in V, x\neq u $. 

We define the \vd approximation $\tilde{\mathit{VD}} := 1 + \frac{d(s,u)+d(s,v)}{\underline{\omega}}$. Then:
\begin{proposition}
\label{lem:vd2}
$\mathit{VD} \leq \tilde{\mathit{VD}} < 2 \cdot \frac{\overline{\omega}}{\underline{\omega}} \mathit{VD} $. (Proof in Section~\ref{sub:proof_vd2}, Appendix)
\end{proposition}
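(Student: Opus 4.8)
The plan is to prove the two inequalities separately, in both cases passing between the node-count quantity $\mathit{VD}$ and the metric distances $d(\cdot,\cdot)$ through the elementary observation that a shortest path on $k$ nodes has exactly $k-1$ edges, each of weight in the range $[\underline{\omega},\overline{\omega}]$. Since shortest-path distances form a metric, the triangle inequality is available throughout.

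For the left inequality $\mathit{VD}\leq\tilde{\mathit{VD}}$, I would start from a pair $(x,y)$ whose shortest path realizes the vertex diameter, so that $|p_{xy}|=\mathit{VD}$. This path has $\mathit{VD}-1$ edges, each of weight at least $\underline{\omega}$, hence $d(x,y)\geq(\mathit{VD}-1)\underline{\omega}$, which rearranges to $\mathit{VD}\leq 1+\frac{d(x,y)}{\underline{\omega}}$. It then remains to bound $d(x,y)$ by $d(s,u)+d(s,v)$. By the triangle inequality $d(x,y)\leq d(s,x)+d(s,y)$, and because $u$ and $v$ are the two nodes farthest from $s$, the sum $d(s,x)+d(s,y)$ over any distinct pair is maximized by $d(s,u)+d(s,v)$. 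Chaining these bounds gives $\mathit{VD}\leq 1+\frac{d(s,u)+d(s,v)}{\underline{\omega}}=\tilde{\mathit{VD}}$.

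For the right inequality, I would instead bound the distances from $s$ from above. Any shortest path from $s$ to $u$ (resp.\ $v$) has at most $\mathit{VD}$ nodes, hence at most $\mathit{VD}-1$ edges, each of weight at most $\overline{\omega}$; thus $d(s,u)\leq(\mathit{VD}-1)\overline{\omega}$ and likewise for $v$. Substituting into the definition yields $\tilde{\mathit{VD}}\leq 1+2\frac{\overline{\omega}}{\underline{\omega}}(\mathit{VD}-1)$. Comparing this with the target value $2\frac{\overline{\omega}}{\underline{\omega}}\mathit{VD}$ reduces the claim to the single inequality $1<2\frac{\overline{\omega}}{\underline{\omega}}$, which holds because $\overline{\omega}\geq\underline{\omega}>0$ forces $\frac{\overline{\omega}}{\underline{\omega}}\geq 1$; combining the non-strict substitution bound with this strict inequality gives the strict conclusion.

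I expect the only genuine subtlety to lie in the left inequality: one must argue carefully that, for the two endpoints of the diameter-realizing path, the sum of their distances from $s$ is dominated by the sum $d(s,u)+d(s,v)$ of the two \emph{largest} such distances. This is where the specific choice of $u$ and $v$ enters, and it requires a short case distinction according to whether one endpoint coincides with $u$. The degenerate situations (a component consisting of a single node, or a pair involving $s$ itself) should be checked but are immediate, since $\tilde{\mathit{VD}}\geq 1$ always and $d(s,s)=0$. Everything else is routine edge counting combined with the triangle inequality.
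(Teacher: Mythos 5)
Your proposal is correct and takes essentially the same route as the paper: for the lower bound, edge-count-times-$\underline{\omega}$ plus the triangle inequality through $s$ with the two largest distances $d(s,u)+d(s,v)$ (including the case distinction on whether an endpoint equals $u$), and for the upper bound, $d(s,u),d(s,v)\leq(\mathit{VD}-1)\cdot\overline{\omega}$ followed by the same arithmetic yielding strictness from $2\,\overline{\omega}/\underline{\omega}>1$. The only cosmetic difference is that the paper phrases the first inequality via the SSSP tree $T$, using $d^T(x,y)\geq d(x,y)$ and $d^T(x,y)\leq d^T(s,x)+d^T(s,y)=d(s,x)+d(s,y)$, which is precisely your direct triangle-inequality step in $G$.
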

To obtain the upper bound \vda, we can simply compute an SSSP search from any node $s$, find the two nodes with maximum distance and perform the remaining calculations.
Notice that \vda extends the upper bound proposed for \rk~\cite{DBLP:conf/wsdm/RiondatoK14} for unweighted graphs: When the graph is unweighted and thus $\underline{\omega} = \overline{\omega}$, \vda becomes equal to the approximation used by \rk.
Complex networks are often characterized by a small diameter and in networks like coauthorship, friendship, communication networks, \vd and $ \frac{\overline{\omega}}{\underline{\omega}}$ can be several order of magnitude smaller than the size of the largest component. This translates into a substantially improved \vd approximation.

%%%%%%%%%%%%%%%%%%%%%%%%%%%%%%%%%%%%%%
\section{New fully-dynamic algorithms}
\paragraph{Overview.}
We propose two fully-dynamic algorithms, one for unweighted (\da, dynamic approximation) and one for weighted (\daw, dynamic approximation weighted) graphs. Similarly to \textsf{IA} and \textsf{IAW}, our new fully-dynamic algorithms keep track of the old shortest paths and substitute them only when necessary. However, if $G$ is not connected or edge deletions occur, \vd can grow and a simple substitution of the paths is not sufficient anymore. 
Although many real-world networks exhibit a shrinking-diameter behavior~\cite{DBLP:conf/kdd/LeskovecKF05}, to ensure our theoretical guarantee, we need to keep track of \vda over time and sample new paths in case \vda increases.
The need for an efficient update of \vda augments significantly the difficulty of the fully-dynamic problem, as well as the necessity to recompute the SSSPs after batches of both edge insertions and deletions. 
The building block for the BC update are basically two: a fully-dynamic algorithm that updates distances and number of shortest paths from a certain source node (SSSP update) and an algorithm that keeps track of a \vd approximation for each connected component of $G$. The following paragraphs give an overview of such building blocks, which could be of independent interest. The last paragraph outlines the dynamic BC approximation algorithm. \textbf{Due to space constraints, a detailed description of the algorithms as well as the pseudocodes and the omitted proofs can be found in the Appendix}.
\paragraph{SSSP update in weighted graphs.}
Our SSSP update is based on \textsf{T-SWSF}~\cite{DBLP:conf/wea/BauerW09}, which recomputes distances from a source node $s$ after a batch $\beta$ of weight updates (or edge insertions/deletions). For our BC algorithm, we need two extensions of \textsf{T-SWSF}: an algorithm that also recomputes the number of shortest paths between $s$ and the other nodes (\sssp) and one that also updates a \vd approximation for the connected component of $s$ (\upvd). 
The \vd approximation is computed as described in Section~\ref{sec:new_vd_approx}. Thus, \upvd keeps track of the two maximum distances $d'$ and $d''$ from $s$ and the minimum edge weight $\underline{\omega}$.
We call \textit{affected nodes} the nodes whose distance (or also whose number of shortest paths, in \sssp) from $s$ has changed as a consequence of $\beta$. Basically, the idea is to put the set $A$ of affected nodes $w$ into a priority queue $Q$ with priority $p(w)$ equal to the candidate distance of $w$. When $w$ is extracted, if there is actually a path of length $p(w)$ from $s$ to $w$, the new distance of $w$ is set to $p(w)$, otherwise $w$ is reinserted into $Q$ with a higher candidate distance. In both cases, the affected neighbors of $w$ are inserted into $Q$. In \upvd, $d'$ and $d''$ are recomputed while updating the distances and $\underline{\omega}$ is updated while scanning $\beta$.  In \sssp, the number $\sigma_{sw}$ of shortest paths of $w$ is recomputed as the sum of the $\sigma_{sz}$ of the new predecessors $z$ of $w$. 

Let $|\beta |$ represent the cardinality of $\beta$ and let $||A ||$ represent the sum of the nodes in $A$ and of the edges that have at least one endpoint in $A$. Then, the following complexity derives from feeding $Q$ with the batch and inserting into/extracting from $Q$ the affected nodes and their neighbors.
\begin{lemma}
\label{thm:complexity}
The time required by \upvd (\sssp) to update the distances and \vda (the number of shortest paths) is $O(|\beta |\log |\beta | + ||A || \log ||A ||)$.
\end{lemma}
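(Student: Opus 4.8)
The plan is to charge the total running time to two disjoint phases of the algorithm — the \emph{feeding} of the priority queue $Q$ with the batch $\beta$, and the main \emph{propagation} loop that processes the affected nodes — and to bound each phase separately, so that the two summands $|\beta|\log|\beta|$ and $||A||\log||A||$ arise naturally, one per phase.

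First I would analyze the feeding phase. A single scan of $\beta$ costs $O(|\beta|)$ and suffices to (i) update the minimum edge weight $\underline{\omega}$ needed by \upvd, and (ii) decide, for each updated edge, which of its endpoints becomes candidate-affected and insert it into $Q$ with its candidate distance. This produces $O(|\beta|)$ insertions into a queue that at this stage holds $O(|\beta|)$ elements, so the feeding phase runs in $O(|\beta|\log|\beta|)$.

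Next I would treat the propagation loop, which contributes the $||A||\log||A||$ term. Two facts must be established: (a) the total number of priority-queue operations (insertions, extractions, reinsertions) performed in the loop is $O(||A||)$; and (b) $Q$ never holds more than $O(||A||)$ elements, so that each operation costs $O(\log||A||)$. For (b) I would argue that every node placed into $Q$ during the loop is an affected node — one whose distance, or (in \sssp) whose count $\sigma_{sw}$, actually changes — so the number of \emph{distinct} nodes in $Q$ is at most $|A|\le ||A||$. For (a) I would invoke the invariant of the underlying \textsf{T-SWSF} scheme that each affected node is extracted from $Q$ only a bounded number of times; each extraction of $w$ inspects $w$'s incident edges once — to test whether some predecessor realizes the candidate distance $p(w)$ and, if so, to relax $w$'s neighbors — costing $O(\deg(w))$ and triggering at most $\deg(w)$ insertions. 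Summing over $A$, the quantity $\sum_{w\in A}\deg(w)$ counts exactly the edges with at least one endpoint in $A$, hence $O(||A||)$ operations. Finally I would observe that the extra bookkeeping of the two variants is subsumed by these same edge scans: in \sssp the recomputation of $\sigma_{sw}$ as the sum of $\sigma_{sz}$ over the new predecessors $z$ of $w$ happens during the scan of $w$'s incident edges, and in \upvd the running maxima $d'$ and $d''$ are updated in $O(1)$ per finalized node. Adding the two phases gives $O(|\beta|\log|\beta| + ||A||\log||A||)$.

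The main obstacle is fact (a), and specifically the control of the \emph{reinsertions}. In a mixed batch containing both weight decreases and increases, a node $w$ can be extracted with a candidate $p(w)$ that turns out to be unrealizable — because the predecessor that justified it has itself moved to a larger distance — forcing $w$ to be reinserted with a larger candidate. I must show that these reinsertions keep the operation count $O(||A||)$, i.e. that each affected node is extracted only a bounded number of times and that its $O(\deg(w))$ scans are charged to distinct incident edges rather than repeated indefinitely; care is also needed to ensure that the $O(|\beta|)$ batch endpoints that are inspected but turn out \emph{not} to be affected do not inflate the propagation-phase bound beyond the feeding-phase term already accounted for. This is precisely the delicate part of the \textsf{T-SWSF} analysis, and the remaining work is to verify that neither the shortest-path counting nor the \vda maintenance introduces reinsertions beyond those already incurred by the distance update.
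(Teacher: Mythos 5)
Your decomposition is exactly the paper's: a feeding phase charged $O(|\beta|\log|\beta|)$ (one insert or decrease-key per batch element, with $\underline{\omega}$ updated during the same scan) and a propagation phase charged $O(\|A\|\log\|A\|)$ by amortizing heap operations over the edges incident to affected nodes. The problem is that the heart of the lemma --- your fact (a), that each affected node triggers its $O(\deg(w))$ scan only a bounded number of times despite reinsertions with larger candidates --- is precisely where you stop. You ``invoke the invariant of the underlying \textsf{T-SWSF} scheme'' and then concede in your closing paragraph that controlling the reinsertions is still open. That is circular: the bounded-extraction claim \emph{is} the content of the lemma, not an importable black box, all the more so because the algorithm here is a modified \textsf{T-SWSF} that additionally maintains $\sigma$, $vis$, and the two maximum distances.

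The paper closes this gap with a short argument hinging on one concrete mechanism of the pseudocode that your proposal never uses: when $w$ is extracted and its candidate is unrealizable ($con(w) > p(w)$), the algorithm sets $d(w)\leftarrow\infty$ before reinserting it. The case analysis then gives the bound directly. Edge scans and per-neighbor heap operations occur only on extractions with $con(w)=p(w)$ or with $d(w)\neq\infty$; if the first extraction of $w$ has $con(w)=p(w)$, the distance is final and $w$ is never reinserted; if instead $con(w)>p(w)$, the one edge scan is paid, $d(w)$ becomes $\infty$, and every subsequent extraction of $w$ falls into the branch guarded by $d(w)\neq\infty$ and costs $O(1)$ --- no matter how many times inconsistent neighbors reinsert $w$ --- until the extraction with $con(w)=p(w)$ finalizes $d(w)$. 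Hence each affected node does expensive work at most twice, the heap operations total $O(\|A\|)$, and each costs $O(\log\|A\|)$ (plus the $O(\log|\beta|)$ already accounted for by the batch endpoints). Your remaining observations --- that $\sigma_{sw}$ is recomputed within the same incident-edge scan and that $d'$, $d''$, $\underline{\omega}$ cost $O(1)$ per update --- match the paper and are fine; it is only the twice-extraction argument, and specifically the role of setting $d(w)=\infty$, that you must supply to make the proof complete.
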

\paragraph{SSSP update in unweighted graphs.}
For unweighted graphs, we basically replace the priority queue $Q$ of \upvd and \sssp with a list of queues, as the one we used in~\cite{DBLP:conf/alenex/BergaminiMS15} for the incremental BFS. Each queue represents a \emph{level} from 0 (which only the source belongs to) to the maximum distance $d'$. The levels replace the priorities and also in this case represent the candidate distances for the nodes. In order not to visit a node multiple times, we use colors to distinguish the unvisited nodes from the visited ones. The replacement of the priority queue with the list of queues decreases the complexity of the SSSP update algorithms for unweighted graphs, that we call \upvdu and \ssspu, in analogy with the ones for weighted graphs.
\begin{lemma}
\label{thm:complexity2}
The time required by \upvdu (\ssspu) to update the distances and \vda (the number of shortest paths) is $O(|\beta |+ ||A ||+d_{\max})$, where $d_{\max}$ is the maximum distance from $s$ reached during the update.
\end{lemma}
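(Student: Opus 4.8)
The plan is to prove Lemma~\ref{thm:complexity2} by a careful amortized accounting of the work performed by \upvdu and \ssspu, mirroring the analysis of the weighted case in Lemma~\ref{thm:complexity} but exploiting the fact that the priority queue has been replaced by the list of queues described above. The key structural observation is that, because edge weights are unit, every node's candidate distance is an integer bounded by $d_{\max}$, so each level of the list of queues is a plain FIFO queue and every queue operation (insertion, extraction) costs $O(1)$ rather than the $O(\log(\cdot))$ incurred by a comparison-based priority queue. This single change is exactly what removes the logarithmic factors from the bound.

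First I would account for the cost of feeding the batch into the data structure. Scanning the batch $\beta$ and determining, for each updated edge, the level at which its endpoints should be enqueued costs $O(|\beta|)$, since each edge triggers a constant number of constant-time queue insertions. Next I would bound the cost of the main loop that processes affected nodes. Each affected node $w$ is extracted from its queue, and when extracted we inspect its incident edges to determine its new candidate distance (in \ssspu, we additionally recompute $\sigma_{sw}$ as the sum of the $\sigma_{sz}$ over the new predecessors $z$, and we re-enqueue the affected neighbors of $w$). The crucial point is that an affected node $w$ can, in the unweighted setting, be reinserted into a \emph{strictly higher} level at most until it reaches its correct final level; moreover, using the colors to mark visited nodes prevents a node from being processed more than a bounded number of times per level. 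Charging the inspection of each edge incident to an affected node to that edge, the total work over all extractions and neighbor insertions is $O(\|A\|)$, since $\|A\|$ counts precisely the affected nodes together with the edges having at least one endpoint in $A$.

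The remaining additive term $d_{\max}$ comes from iterating over the levels of the list of queues themselves. Even when a given level contains no affected nodes, the algorithm must advance the level pointer from $0$ up to the maximum distance reached, so there is an unavoidable $O(d_{\max})$ cost simply to sweep through the (possibly empty) queues; this is the analogue of the overhead already present in the incremental BFS of~\cite{DBLP:conf/alenex/BergaminiMS15}. I would also note that maintaining $d'$, $d''$, and $\underline{\omega}$ for \upvdu adds only constant work per extracted node and per batch edge, so it is absorbed into the terms already counted. Summing the three contributions $O(|\beta|)$, $O(\|A\|)$, and $O(d_{\max})$ yields the claimed bound $O(|\beta| + \|A\| + d_{\max})$.

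The main obstacle I anticipate is the argument that no affected node is re-enqueued too many times, i.e.\ bounding the number of times a node can be placed back into a higher-level queue before it settles at its correct distance. In the weighted case this is handled implicitly by the priority-queue semantics, but with the list of queues one must argue explicitly that the color mechanism together with the monotone progression of levels guarantees that the cumulative re-enqueue cost stays within $O(\|A\|)$ and does not secretly hide an extra factor. Making this charging scheme precise, and verifying that the recomputation of $\sigma_{sw}$ in \ssspu touches each incident edge only a bounded number of times, is the delicate part of the proof; the rest is a routine summation of the three cost components.
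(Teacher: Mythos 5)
Your overall decomposition is exactly the paper's: $O(|\beta|)$ to feed the batch into the level queues, an amortized $O(\|A\|)$ charge for all extractions and incident-edge scans, and an additive $O(d_{\max})$ for sweeping the (possibly empty) levels; the observations that constant-time FIFO operations replace the logarithmic heap operations and that maintaining $d'$, $d''$ and $\underline{\omega}$ costs only $O(1)$ per step are likewise the same. The genuine gap is that the heart of the lemma --- the constant in the amortization --- is precisely the step you defer as ``the delicate part.'' Moreover, the one mechanism you do sketch for it, namely that the colors prevent a node from being processed ``more than a bounded number of times \emph{per level},'' is not the right invariant and would not yield the bound: since candidate levels only increase, a per-level bound still allows a node $w$ to undergo up to $\Theta(d_{\max})$ full scans of its incidence list over the course of the update, which is exactly the hidden factor you say you are worried about.

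The paper closes this with a \emph{global} bound of at most two full scans per affected node, and the argument rests not on the colors alone but on their interplay with the assignment $d(w)\leftarrow\infty$. On the first extraction of $w$ at some level $k$: if $con(w)=k$, then $d(w)$ is final, $w$ is colored black, and every later extraction of $w$ is skipped in $O(1)$; if $con(w)>k$, then $d(w)$ is set to $\infty$ and $w$ is re-enqueued at level $con(w)$. From then on, because $d(w)=\infty$, an extraction of $w$ triggers no further enqueuing of its neighbors until the level $k'$ with $con(w)=k'$ is reached, at which point the incident edges of $w$ are scanned one last time (this scan also recomputes $\sigma(w)$ from the predecessors, so \ssspu incurs no extra traversals) and $w$ turns black. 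Hence each affected node and its incident edges are scanned at most twice, every queue insertion is $O(1)$ and can be charged either to such an edge scan or to an element of $\beta$, and summing the three contributions gives $O(|\beta|+\|A\|+d_{\max})$. Your proposal needs this case analysis spelled out to be a proof; as written, it asserts the conclusion of the charging scheme rather than establishing it, and the per-level version of the claim it gestures at is too weak.
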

\paragraph{Fully-dynamic \vd approximation.}
The algorithm keeps track of a \vd approximation for the whole graph $G$, \ie for each connected component of $G$. It is composed of two phases. In the initialization, we compute an SSSP from a source node $s_i$ for each connected component $C_i$. During the SSSP search from $s_i$, we also compute a \vd approximation $\tilde{\mathit{VD}_i}$ for $C_i$, as described in Sections~\ref{sec:rk} and~\ref{sec:new_vd_approx}. In the update, we recompute the SSSPs and the \vd approximations with \upvd (or \upvdu). Since components might split or merge, we might need to compute new approximations, in addition to update the old ones. To do this, for each node, we keep track of the number of times it has been visited. This way we discard source nodes that have already been visited and compute a new approximation for components that have become unvisited. The complexity of the update of the \vd approximation derives from the \vda update in the single components, using \upvd and \upvdu.
\begin{theorem}
\label{thm:complexity_vd}
The time required to update the \vd approximation is $O(n_c\cdot |\beta |\log |\beta |+\sum_{i=1}^{n_c}||A^{(i)}|| \log ||A^{(i)}||)$ in weighted graphs and $O(n_c\cdot |\beta |+\sum_{i=1}^{n_c}||A^{(i)}||+d^{(i)}_{max})$ in unweighted graphs, where $n_c$ is the number of components in $G$ before the update and $A^{(i)}$ is the sum of affected nodes in $C_i$ and their incident edges.
\end{theorem}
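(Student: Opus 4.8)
The plan is to reduce the running time of the full \vd-approximation update to the cost of the per-component SSSP updates already analyzed in Lemmas~\ref{thm:complexity} and~\ref{thm:complexity2}, and then sum over the components. First I would separate the work into the two sources of cost identified in the algorithm description: (a) re-running \upvd (resp.\ \upvdu) on the existing components to update each $\tilde{\mathit{VD}_i}$, and (b) the bookkeeping needed when components split or merge, i.e.\ feeding the batch to each component, testing the visit counters, and re-initializing a fresh SSSP from a new source for any component that has become unvisited. The key observation is that part~(a) is exactly $n_c$ invocations of the single-source routine, each on its own component $C_i$, so by Lemma~\ref{thm:complexity} its total cost is $O\!\left(\sum_{i=1}^{n_c}\bigl(|\beta|\log|\beta| + \|A^{(i)}\|\log\|A^{(i)}\|\bigr)\right)$ in the weighted case, and analogously $O\!\left(\sum_{i=1}^{n_c}\bigl(|\beta| + \|A^{(i)}\| + d^{(i)}_{\max}\bigr)\right)$ in the unweighted case.

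The next step is to simplify the summed batch term. Since the term $|\beta|\log|\beta|$ (resp.\ $|\beta|$) does not depend on $i$, summing it over the $n_c$ components contributes $O(n_c\cdot|\beta|\log|\beta|)$ (resp.\ $O(n_c\cdot|\beta|)$), which accounts for the leading factor in the claimed bounds. The affected-node terms $\|A^{(i)}\|\log\|A^{(i)}\|$ (resp.\ $\|A^{(i)}\| + d^{(i)}_{\max}$) stay inside the sum because the affected sets differ from component to component. Combining the two pieces gives precisely the stated expressions, so the heart of the proof is just additivity of the per-component bounds plus pulling the $i$-independent batch factor out of the sum.

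The remaining obstacle, and the step I expect to require the most care, is showing that the split/merge bookkeeping of part~(b) does not dominate this bound. I would argue that each component is fed the batch $\beta$ (or the relevant restriction of it) at most a constant number of times, so the batch-handling cost is already absorbed into the $O(n_c\cdot|\beta|\log|\beta|)$ term; that maintaining and checking the per-node visit counters costs $O(1)$ per affected node and is thus absorbed into the $\|A^{(i)}\|$ terms; and crucially that re-initializing a fresh SSSP for a newly created component is itself an SSSP computation over that component, whose cost is again covered by the corresponding $\|A^{(i)}\|$ term since every node of a split-off or newly-unvisited component is affected. The subtle point is to make sure components are counted consistently: the bound is stated in terms of $n_c$, the number of components \emph{before} the update, so I must verify that components arising from splits are charged against their parent's affected set rather than introducing new summands outside the range $1,\dots,n_c$. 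Once this accounting is pinned down, the theorem follows directly from Lemmas~\ref{thm:complexity} and~\ref{thm:complexity2}.
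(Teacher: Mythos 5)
Your proposal is correct and follows essentially the same route as the paper's proof: both decompose the cost into the $n_c$ per-component invocations of \upvd/\upvdu bounded via Lemmas~\ref{thm:complexity} and~\ref{thm:complexity2} (with the $i$-independent batch term contributing the $n_c\cdot|\beta|$ factor), and both dispose of the split/merge bookkeeping by observing that scanning the queue of unvisited nodes and re-initializing SSSPs in newly unvisited components touches only affected nodes, each a constant number of times. The accounting subtlety you flag at the end is resolved exactly as the paper does it, by noting that every node of a newly unvisited component is affected and hence its fresh SSSP is absorbed into the $\|A^{(i)}\|$ terms.
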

\paragraph{Dynamic BC approximation.}
Let $G$ be an undirected graph with $n_c$ connected components. Now that we have defined our building blocks, we can outline a fully-dynamic BC algorithm: we use the fully dynamic \vd approximation to recompute \vda after a batch, we update the $r$ sampled paths with \textsf{updateSSSP} and, if \vda (and therefore $r$) increases, we sample new paths. However, since \textsf{updateSSSP} and \textsf{updateApprVD} share most of the operations, we can ``merge'' them and update at the same time the shortest paths from a source node $s$ and the \vd approximation for the component of $s$. We call such hybrid function \textsf{updateSSSPVD}. Instead of storing and updating $n_c$ SSSPs for the \vd approximation and $r$ SSSPs for the BC scores, we 
recompute a \vd approximation for each of the $r$ samples while recomputing the shortest paths with \textsf{updateSSSPVD}. This way we do not need to compute an additional SSSP for the components covered by $r$ sampled paths (\ie in which the paths lie), saving time and memory. Only for components that are not covered by any of them (if they exist), we compute and store a separate \vd approximation. We refer to such components as $R'$ (and to $|R'|$ as $r'$). 

\begin{algorithm}[h]
\begin{small}
\LinesNumbered
\SetKwData{B}{$\tilde{c}_B$}
\SetKwFunction{dynamicSSSP}{updateSSSP}
\SetKwFunction{updateApproxVD}{updateApprVD}
\SetKwFunction{updateApproxVDP}{updateSSSPVD}
\SetKwFunction{computeApproxVD}{initApprVD}
\SetKwFunction{replacePath}{replacePath}
\SetKwFunction{sampleNewPaths}{sampleNewPaths}
\SetKwFunction{applyBatch}{applyBatch}
\applyBatch{$G, \beta$}\; \label{bc:batch}
\For{$i \leftarrow 1$ \KwTo $r$}
{\label{bc:update1}
	$\tilde{\mathit{VD}}_i \leftarrow$ \updateApproxVDP{$s_i, \beta$}\;\label{bc:dynsssp}
	%\tcp{If the shortest paths between $s_i$ and $t_i$ have changed, we sample a new one}
	\replacePath{$s_i, t_i$} \tcc*{update of BC scores}
}\label{bc:update2}
\ForEach{$C_i \in R'$}
{\label{bc:rprime1}
	$\tilde{\mathit{VD}}_i \leftarrow$ \updateApproxVD{$C_i, \beta$}\;
}\label{bc:rprime2}
\ForEach{unvisited $C_j$}
{\label{bc:unvisited1}
	add $C_j$ to $R'$\;
	$\tilde{\mathit{VD}}_j \leftarrow$ \computeApproxVD{$C_j$}\;
}\label{bc:unvisited2}
\vda $\leftarrow \max_{C_i \in R \cup R'} \tilde{\mathit{VD}}_i$\; \label{bc:recompute1}
$r_{\text{new}} \leftarrow (c/\epsilon^2) (\lfloor \log_2(\tilde{\mathit{VD}}-2)\rfloor +\ln(1/\delta))$\; \label{bc:recompute2}
\If{$r_{\text{new}}>r$}
{\label{bc:norm1}
	\sampleNewPaths{} \tcc*{update of BC scores}
	\ForEach{$v \in V$}
	{
		$\B(v) \leftarrow \B(v) \cdot r/r_{\text{new}}$ \label{bc:norm1} \tcc*{renormalization of BC scores}
	}
$r \leftarrow r_{\text{new}}$\;
}\label{bc:norm2}
\Return{$\{(v,\B(v)):\: v\in V\}$}
\end{small}
\caption{BC update after a batch $\beta$ of edge updates}
\label{bc_overview}
\end{algorithm}
%
%  \vspace{-1ex}
The high-level description of the update after a batch $\beta$ is shown as Algorithm~\ref{bc_overview}. 
After changing the graph according to $\beta$ (Line~\ref{bc:batch}), we recompute the previous $r$ samples and the \vd approximations for their components (Lines~\ref{bc:update1}~-~\ref{bc:update2}). Then, similarly to \ia and \iaw, we update the BC scores of the nodes in the old and in the new shortest paths. 
Thus, we update a \vd approximation for the components in $R'$ (Lines~\ref{bc:rprime1}~-~\ref{bc:rprime2}) and compute a new approximation for new components that have formed applying the batch (Lines~\ref{bc:unvisited1}~-~\ref{bc:unvisited2}). Then, we use the results to update the number of samples (Lines~\ref{bc:recompute1}~-~\ref{bc:recompute2}). If necessary, we sample additional paths and normalize the BC scores (Lines~\ref{bc:norm1}~-~\ref{bc:norm2}). The difference between \da and \daw is the way the SSSPs and the \vd approximation are updated: in \da we use \upvdu and in \daw \upvd. Differently from \rk and our previous algorithms \ia and \iaw, in \da and \daw we scan the neighbors every time we need the predecessors instead of storing them. This allows us to use $\Theta(n)$ memory per sample (\ie, $\Theta((r+r')n)$ in total) instead of $\Theta(m)$ per sample, while our experiments show that the running time is hardly influenced. The number of samples depends on $\epsilon$, so in theory this can be as large as $|V|$. However, the experiments conducted in~\cite{DBLP:conf/alenex/BergaminiMS15} show that relatively large values of $\epsilon$ (\eg $\epsilon=0.05$) lead to good ranking of nodes with high BC and for such values the number of samples is typically much smaller than $|V|$, making the memory requirements of our algorithms significantly less demanding than those of the dynamic exact algorithms ($\Omega(n^2)$) for many applications.
 \begin{theorem}
\label{thm:correctness_bc}
Algorithm~\ref{update} preserves the guarantee on the maximum absolute error, \ie naming $c'_{B}(v)$ and $\tilde{c}'_B(v)$ the new exact and approximated BC values, respectively, $\Pr(\exists v\in V\: s.t.\:|c'_{B}(v)-\tilde{c}'_B(v)|>\epsilon)<\delta$.

\end{theorem}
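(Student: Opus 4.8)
The plan is to reduce the claim to the static guarantee of \rk: if $r$ shortest paths are drawn independently according to the sampling distribution $\pi_{G'}$ of the updated graph $G'$ and $r\ge \frac{c}{\epsilon^{2}}\bigl(\lfloor\log_2(\mathit{VD}'-2)\rfloor+1+\ln\tfrac{1}{\delta}\bigr)$, where $\mathit{VD}'$ is the true vertex diameter of $G'$, then $\Pro{\exists v\in V: |c'_B(v)-\tilde{c}'_B(v)|>\epsilon}<\delta$. Hence it suffices to establish two facts: (I) when the algorithm terminates, the stored collection of $r_{\text{new}}$ paths is distributed as $r_{\text{new}}$ independent draws from $\pi_{G'}$; and (II) $r_{\text{new}}$ satisfies the \rk bound for $\mathit{VD}'$.

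For (I) I would decompose each sample into its source--target pair and the path chosen between them, and show that both have the correct distribution in $G'$. The pairs $(s_i,t_i)$ are drawn uniformly over the $n(n-1)$ ordered pairs; the algorithm never modifies them and an edge update leaves $n$ unchanged, so their marginal stays uniform and graph-independent. For each retained pair, \textsf{updateSSSPVD} rebuilds the current shortest-path DAG in $G'$ (maintaining correct distances and counts $\sigma'$), and \textsf{replacePath} re-selects a path by the same predecessor-based rule as \rk, picking each shortest path of $G'$ with probability $1/\sigma'_{s_it_i}$. The product $\frac{1}{n(n-1)}\cdot\frac{1}{\sigma'_{s_it_i}}=\pi_{G'}$ matches a fresh \rk draw, and since the path re-selections use independent coins the samples remain mutually independent. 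Any extra paths produced by \textsf{sampleNewPaths} when $r$ grows are taken directly from $\pi_{G'}$, so the final collection is i.i.d.\ from $\pi_{G'}$.

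For (II) I would use the vertex-diameter machinery. Each component's maintained approximation is an upper bound on that component's true vertex diameter by Proposition~\ref{lem:vd2} (and its unweighted counterpart), so the value \vda taken as the maximum over $R\cup R'$ upper-bounds $\mathit{VD}'$. As the sample-size formula is monotone in its vertex-diameter argument, the $r_{\text{new}}$ computed from \vda is at least the quantity \rk requires for $\mathit{VD}'$. It then remains to check that the estimator is consistent after resizing: each stored path previously added $1/r$ to the scores of its nodes, and the renormalization multiplies every score by $r/r_{\text{new}}$, so each old path now contributes $1/r_{\text{new}}$, exactly as the freshly sampled ones do. Thus $\tilde{c}'_B(v)$ equals the \rk estimator formed from $r_{\text{new}}$ i.i.d.\ samples, and (I)--(II) together yield the theorem.

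The main obstacle is the distributional invariance asserted in (I), in particular the interaction between re-used source--target pairs and changes in connectivity. The clean argument---that re-sampling only the path preserves $\pi_{G'}$ because the pair's marginal is graph-independent---must be reconciled with pairs that were connected in $G$ but become disconnected in $G'$ (or the reverse). I would treat a disconnected pair as contributing no path, which is consistent with its zero contribution to every $c'_B(v)$, and rely on \textsf{sampleNewPaths} together with the handling of $R'$ and of newly formed components to replace any sample lost to disconnection, so that the terminal multiset still comprises $r_{\text{new}}$ valid independent draws from $\pi_{G'}$. Showing that this bookkeeping introduces no bias is the delicate point underpinning the whole correctness argument.
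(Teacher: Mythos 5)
Your proposal is correct and follows essentially the same route as the paper: reduce Theorem~\ref{thm:correctness_bc} to the static \rk guarantee by showing that every path held at termination is distributed as a fresh draw from $\pi_{G'}$ --- the replaced paths because the uniform source--target marginal is graph-independent and the predecessor-based re-selection picks each shortest path of $G'$ with probability $1/\sigma'_{s_it_i}$ (the paper delegates exactly this to Theorem~4.1 of~\cite{DBLP:conf/alenex/BergaminiMS15}), and the $\Delta r$ additional paths because they are sampled from scratch as in \rk (the paper cites Lemma~7 of~\cite{DBLP:conf/wsdm/RiondatoK14}). Your part~(II) --- that $\tilde{\mathit{VD}}\ge \mathit{VD}'$ via Proposition~\ref{lem:vd2} and the per-component correctness of the dynamic \vd maintenance, plus the $r/r_{\text{new}}$ renormalization making old and new paths contribute $1/r_{\text{new}}$ alike --- is left implicit in the paper's proof, so you are supplying detail rather than diverging. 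One correction to your flagged ``delicate point'': Algorithm~\ref{update} does \emph{not} replace samples whose pair becomes disconnected, and the $R'$ SSSPs and \texttt{sampleNewPaths} play no such role ($R'$ exists solely to maintain \vd approximations in uncovered components; new paths are drawn only when $r$ grows). No replacement is needed: in \rk the pair is drawn uniformly over all $n(n-1)$ ordered pairs irrespective of connectivity, so a pair with $\sigma'_{s_it_i}=0$ is a legitimate null sample contributing nothing, which is precisely consistent with $\pi_{G'}$ and with how \rk already handles disconnected graphs; your first resolution (null contribution) is the whole story, and the alternative ``replacement'' mechanism you hedge with would, if actually performed, be the one thing that could introduce bias.
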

 \begin{theorem}
\label{thm:complexity_bc}
Let $\Delta r$ be the difference between the value of $r$ before and after the batch and let $||A^{(i)}||$ be the sum of affected nodes and their incident edges in the $i$-th SSSP. The time required for the BC update in unweighted graphs is 
$O((r+r') |\beta | + \sum_{i=1}^{r+r'} (||A^{(i)}||+d_{\max}^{(i)}) + \Delta r(|V|+|E|))$. 
In weighted graphs, it is $O((r+r') |\beta |\log |\beta | + \sum_{i=1}^{r+r'} ||A^{(i)}|| \log ||A^{(i)}|| + \Delta r(|V|\log |V|+|E|))$.
\end{theorem}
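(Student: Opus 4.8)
My plan is to bound the running time of Algorithm~\ref{bc_overview} line by line, grouping the work into four parts: the $r+r'$ shortest-path/\vda updates (Lines~\ref{bc:update1}--\ref{bc:update2} and~\ref{bc:rprime1}--\ref{bc:rprime2}), the $r$ path replacements, the handling of the components that become unvisited (Lines~\ref{bc:unvisited1}--\ref{bc:unvisited2}), and the resampling plus renormalization that is triggered only when $r$ grows (Lines~\ref{bc:norm1}--\ref{bc:norm2}). I would carry out the unweighted case (\da) in detail; the weighted bound for \daw then follows by the same accounting after replacing the $O(|\beta|+||A^{(i)}||+d_{\max}^{(i)})$ estimate of Lemma~\ref{thm:complexity2} with the $O(|\beta|\log|\beta|+||A^{(i)}||\log||A^{(i)}||)$ estimate of Lemma~\ref{thm:complexity}, and each BFS by a Dijkstra run, which accounts for the $|V|\log|V|$ factor in the last term.

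First I would charge the two update loops, which dominate the core cost. Each call \textsf{updateSSSPVD}$(s_i,\beta)$ in Line~\ref{bc:dynsssp} is by construction the merge of \upvdu and \ssspu, so by Lemma~\ref{thm:complexity2} it runs in $O(|\beta|+||A^{(i)}||+d_{\max}^{(i)})$; summing over the $r$ samples together with the $r'$ calls to \textsf{updateApprVD} in Lines~\ref{bc:rprime1}--\ref{bc:rprime2} gives $O((r+r')|\beta|+\sum_{i=1}^{r+r'}(||A^{(i)}||+d_{\max}^{(i)}))$, which is exactly the first two terms of the claim. The conditional block is entered only when $r_{\mathrm{new}}>r$, \ie $\Delta r\ge 1$; there \textsf{sampleNewPaths} performs $\Delta r$ fresh BFS-plus-backtracking samples at $O(|V|+|E|)$ each, and the renormalization loop costs $O(|V|)$, both absorbed into the $\Delta r(|V|+|E|)$ term (which is why the $O(|V|)$ renormalization, executed only when $\Delta r\ge 1$, needs no separate summand).

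The remaining work I must show does not exceed this budget. For \textsf{replacePath}$(s_i,t_i)$ I would argue that deleting the stored old path costs $O(d_{\max}^{(i)})$, since an unweighted shortest path from $s_i$ has at most $d_{\max}^{(i)}+1$ nodes, and that re-sampling the new path — which, because predecessors are not stored, scans the neighborhood of every node it visits — rescans only nodes whose distance or shortest-path count changed, so its neighbor-scanning cost is bounded by $||A^{(i)}||$ plus the $O(d_{\max}^{(i)})$ traversal. For \textsf{initApprVD} on the newly unvisited components (Lines~\ref{bc:unvisited1}--\ref{bc:unvisited2}) the key observation is that, since the batch adds no new vertices, a component can become unvisited only because $\beta$ disconnected its nodes from the source of a previously covering SSSP; those nodes are therefore affected in that search, and the fresh traversal over such a component is paid for by the corresponding $||A^{(i)}||$ term, while disjointness of components bounds the total by $O(|V|+|E|)$ in the worst case.

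The main obstacle is precisely this last charging argument for \textsf{replacePath} and \textsf{initApprVD}: making it rigorous that the neighbor scans during path re-sampling and the fresh searches over split-off components are \emph{confined} to affected nodes and their incident edges, so that they fit inside $\sum_{i} ||A^{(i)}||$ rather than incurring an uncontrolled $\Theta(r\,|E|)$ or standalone $\Theta(|E|)$ overhead that would break the stated bound. Once this locality is established, summing the four parts yields the unweighted bound, and the substitution described in the first paragraph yields the weighted one.
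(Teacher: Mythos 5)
Your decomposition coincides with the paper's own proof (Section~\ref{sub:proof_complexity_bc}): the paper likewise starts from the case $\Delta r=\Delta r'=0$, charges the $r$ sample updates via Lemma~\ref{thm:complexity2} (resp.\ Lemma~\ref{thm:complexity} plus Dijkstra in the weighted case), adds $O(\Delta r(|V|+|E|))$ resp.\ $O(\Delta r(|V|\log|V|+|E|))$ for freshly sampled paths, charges the $r'$ covered-component \vd updates the same way, and charges the searches over newly formed components to their nodes and edges. The locality you need for \textsf{initApprVD} on newly unvisited components is exactly the argument the paper makes in the proof of Theorem~\ref{thm:complexity_vd}: a component becomes unvisited only when its nodes' distances in some maintained SSSP are set to $\infty$, hence those nodes are affected, are inserted into $U$, and each is scanned at most once by the fresh search. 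So that half of your ``main obstacle'' is settled, and settled precisely along the lines you sketch.

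The other half, however, would not go through as you state it, and it is also the one point the paper's proof silently skips. Since predecessors are not stored, the rebuild in \textsf{replacePath} (Lines~\ref{up:pred1} and~\ref{up:pred2} of Algorithm~\ref{update}) recomputes $P_{s_i}(v)$ by scanning the \emph{full} neighborhood of every node on the new path, for every sample $i$, even when $A^{(i)}=\emptyset$ -- the loop executes \textsf{replacePath} unconditionally, and an unchanged path is re-scanned in its entirety. Its cost is $\Theta(\sum_{v\in p_{(i)}}\deg(v))$ per sample; this cannot be confined to affected nodes (the new path generally passes through unaffected ones), and it is not bounded by $||A^{(i)}||+d_{\max}^{(i)}$ either, since $d_{\max}^{(i)}$ bounds the \emph{number} of path nodes, not the sum of their degrees. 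The paper does not establish the locality you are after: its proof simply never itemizes the rebuild cost and asserts the simplest-case bound $O(r|\beta|+\sum_{i=1}^r(||A^{(i)}||+d_{\max}^{(i)}))$ directly, the only acknowledgment being the empirical remark that the algorithm remains fast ``despite rebuilding the paths''. So you should either adopt the paper's coarse accounting and drop the locality claim, or state the result with the extra additive term $O(\sum_{i=1}^{r}\sum_{v\in p_{(i)}}\deg(v))$; as written, your plan stalls on a claim that is false rather than merely unproven (and the analogous caveat applies in the weighted case, where the length of the old path to be un-credited is bounded only by \vd, a term absent from the weighted bound).
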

Notice that, if \vda does not increase, $\Delta r = 0$ and the complexities are the same as the only-incremental algorithms \textsf{IA} and \textsf{IAW} we proposed in~\cite{DBLP:conf/alenex/BergaminiMS15}. Also, notice that in the worst case the complexity can be as bad as recomputing from scratch. However, no dynamic SSSP (and so probably also no BC approximation) algorithm exists that is faster than recomputation.

%%%%%%%%%%%%%%%%%%%%%%%%%%%%%%%%%%%%%%%%%%%%%%%%%%
\section{Experiments}
\label{sec:experimental}
\paragraph{Implementation and settings.} We implement our two dynamic approaches \textsf{DA} and \textsf{DAW} in C++, building on the open-source \textit{NetworKit} framework~\cite{DBLP:journals/corr/StaudtSM14}, which also contains
the static approximation \textsf{RK}.
In all experiments we fix $\delta$ to 0.1 and $\epsilon$ to 0.05, as a good tradeoff between running time and accuracy~\cite{DBLP:conf/alenex/BergaminiMS15}. 
This means that, with a probability of at least $90\%$, the computed BC values deviate at most
$0.05$ from the exact ones. In our previous experimental study~\cite{DBLP:conf/alenex/BergaminiMS15}, we showed that for such values of $\epsilon$ and $\delta$, the ranking error (how much the ranking computed by the approximation algorithm differs from the rank of the exact algorithm) is low for nodes with high betweenness. Since our algorithms simply update the approximation of \rk, our accuracy in terms or ranking error does not differ from that of \rk (see~\cite{DBLP:conf/alenex/BergaminiMS15} for details). Also, our experiments in~\cite{DBLP:conf/alenex/BergaminiMS15} have shown that dynamic exact algorithms are not scalable, because of both time and memory requirements, therefore we do not include them in our tests.
The machine used has 2 x 8 Intel(R) Xeon(R) E5-2680 cores at 2.7 GHz, of which we use only one core, and 256 GB RAM.
  \vspace{-2ex}
\begin{table*}[t]
\begin{center}
\begin{scriptsize}
  \begin{tabular}{ | l | l | r | r | r |}
    \hline
    Graph 						& Type 				& Nodes 			& Edges 			&  Type\\ \hline
   
    \texttt{repliesDigg}				& communication	& 30,398			& 85,155 			& Weighted	\\
    \texttt{emailSlashdot}			& communication	& 51,083 			& 116,573			& Weighted	\\ 
    \texttt{emailLinux}				& communication	& 63,399 			& 159,996			& Weighted	\\
    \texttt{facebookPosts}		& communication	& 46,952			& 183,412 		& Weighted	\\
    \texttt{emailEnron}				& communication	& 87,273 			& 297,456			& Weighted	\\
    \texttt{facebookFriends}			& friendship		& 63,731			& 817,035 		& Unweighted	\\
    \texttt{arXivCitations} 			& coauthorship 		& 28,093			& 3,148,447		& Unweighted	\\
    \texttt{englishWikipedia}		& hyperlink		& 1,870,709		& 36,532,531 		& Unweighted	\\

    \hline
  \end{tabular}
  \end{scriptsize}
\end{center}
  \caption{Overview of real dynamic graphs used in the experiments.}
  \label{table:graphs}
   \vspace{-4ex}
\end{table*}
\paragraph{Data sets and experiments.} We concentrate on two types of graphs: synthetic and real-world graphs with real edge dynamics. The real-world networks are taken from The Koblenz Network Collection (KONECT)~\cite{DBLP:conf/www/Kunegis13} and are summarized in Table~\ref{table:graphs}. All the edges of the KONECT graphs are characterized by a time of arrival. In case of multiple edges between two nodes, we extract two versions of the graph: one unweighted, where we ignore additional edges, and one weighted, where we replace the set  $E_{st}$ of edges between two nodes with an edge of weight $1/|E_{st}|$. 
In our experiments, we let the batch size vary from 1 to 1024 and for each batch size, we average the running times over 10 runs.
Since the networks do not include edge deletions, we implement additional simulated dynamics. In particular, we consider the following experiments. (i) \textit{Real dynamics.} We remove the $x$ edges with the highest timestamp from the network and we insert them back in batches, in the order of timestamps. (ii) \textit{Random insertions and deletions.} We remove $x$ edges from the graph, chosen uniformly at random. To create batches of both edge insertions and deletions, we add back the deleted edges with probability $1/2$ and delete other random edges with probability $1/2$. (iii) \textit{Random weight changes.} In weighted networks, we choose $x$ edges uniformly at random and we multiply their weight by a random 
value in the interval $(0,2)$.

For synthetic graphs we use a generator based on a unit-disk graph model in hyperbolic geometry~\cite{DBLP:journals/corr/LoozSMP15}, where edge insertions and deletions are obtained by moving the nodes in the hyperbolic plane. The networks produced by the model were shown to have many properties of real complex networks, like small diameter and power-law degree distribution (see~\cite{DBLP:journals/corr/LoozSMP15} and the references therein). We generate seven networks, with $|E|$ ranging from about $2\cdot 10^4$ to about $2 \cdot 10^7$ and $|V|$ approximately equal to $|E|/10$.
\paragraph{Speedups.}
\begin{figure}[th]
\begin{center}
\includegraphics[width = 0.8\textwidth]{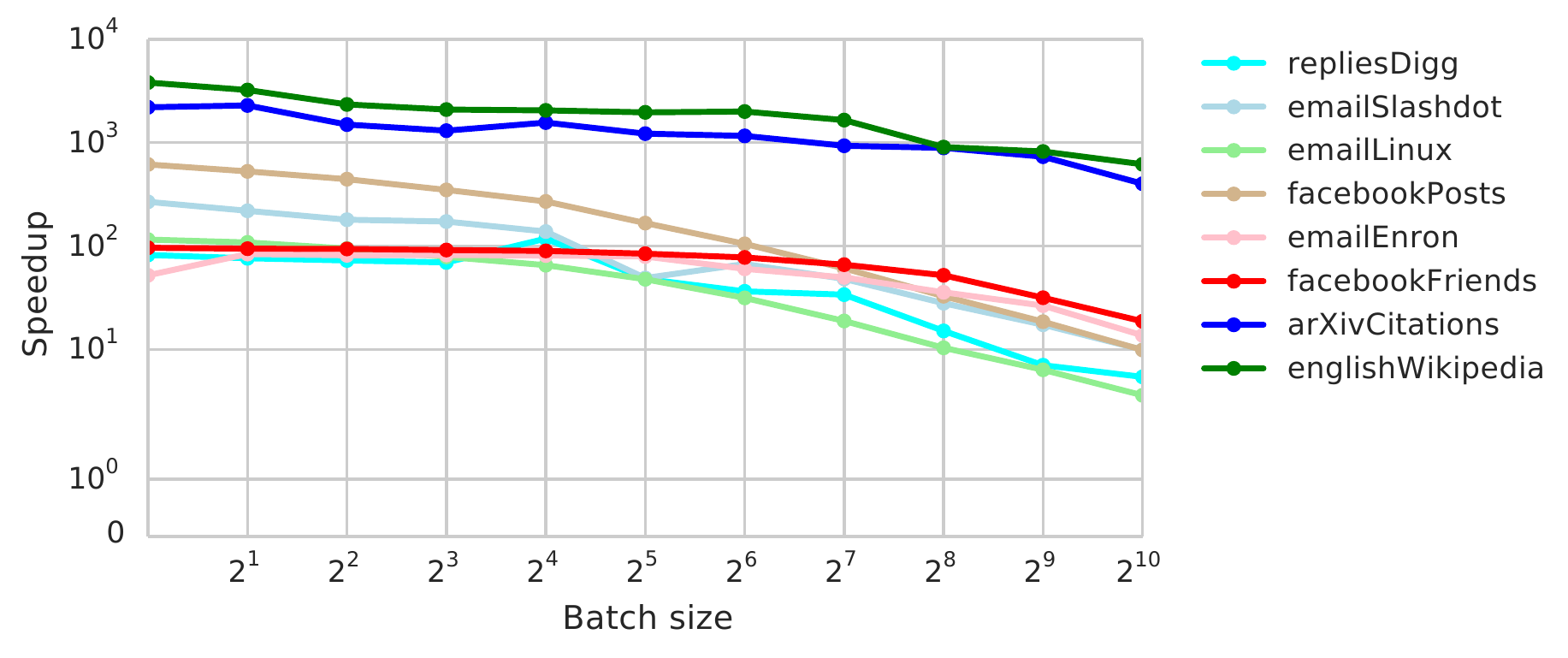}
\caption{Speedups of \da on \rk in real unweighted networks using real dynamics.}
\label{fig:unweighted}
 \vspace{-4ex}
\end{center}
\end{figure} 
 \begin{table*}
\begin{center}
\begin{scriptsize}
  \begin{tabular}{  l | r | r | r | r | r | r | r | r |}
\cline{2-9}  
 & \multicolumn{4}{  c |  }{Real}& \multicolumn{4}{  c |  }{Random} \\ \cline{2-9}
 &\multicolumn{2}{  c |  }{Time [s]}& \multicolumn{2}{  c |  }{Speedups} &\multicolumn{2}{  c |  }{Time [s]}& \multicolumn{2}{  c |  }{Speedups} \\ \cline{1-9}
\multicolumn{1}{| l|}{Graph} & $|\beta|=1$ & $|\beta|=1024$ & $|\beta|=1$ & $|\beta|=1024$  & $|\beta|=1$ & $|\beta|=1024$ & $|\beta|=1$ & $|\beta|=1024$ \\\cline{1-9}
\multicolumn{1}{| l|}{\texttt{repliesDigg}} 		& 0.078	& 1.028	& 76.11	& 5.42	& 0.008	& 0.832	& 94.00	& 4.76	\\ \cline{1-9}		
\multicolumn{1}{| l|}{\texttt{emailSlashdot}} 	& 0.043	& 1.055	& 219.02	& 9.91	& 0.038	& 1.151	& 263.89	& 28.81	\\ \cline{1-9}		
\multicolumn{1}{| l|}{\texttt{emailLinux}} 		& 0.049	& 1.412	& 108.28	& 3.59	& 0.051	& 2.144	& 72.73	& 1.33	\\ \cline{1-9}				
\multicolumn{1}{| l|}{\texttt{facebookPosts}} 	& 0.023	& 1.416	& 527.04	& 9.86	& 0.015	& 1.520	& 745.86	& 8.21	\\ \cline{1-9}	
\multicolumn{1}{| l|}{\texttt{emailEnron}} 		& 0.368	& 1.279	& 83.59	& 13.66	& 0.203	& 1.640	& 99.45	& 9.39	\\ \cline{1-9}				
\multicolumn{1}{| l|}{\texttt{facebookFriends}} 	& 0.447	& 1.946	& 94.23	& 18.70	& 0.448	& 2.184	& 95.91	& 18.24	\\ \cline{1-9}			
\multicolumn{1}{| l|}{\texttt{arXivCitations}} 	& 0.038	& 0.186	& 2287.84	& 400.45	& 0.025	& 1.520	& 2188.70	& 28.81	\\ \cline{1-9}			
\multicolumn{1}{| l|}{\texttt{englishWikipedia}} 	& 1.078	& 6.735	& 3226.11	& 617.47	& 0.877	& 5.937	& 2833.57	& 703.18	\\ \cline{1-9}		
  \end{tabular}
  \end{scriptsize}
\end{center}
  \caption{Times and speedups of \da on \rk in unweighted real graphs under real dynamics and random updates, for batch sizes of 1 and 1024.}
  \label{table:speedups}
   \vspace{-6ex}
\end{table*} 

\begin{figure}[htb]
\begin{center}
\includegraphics[width = 0.8\textwidth]{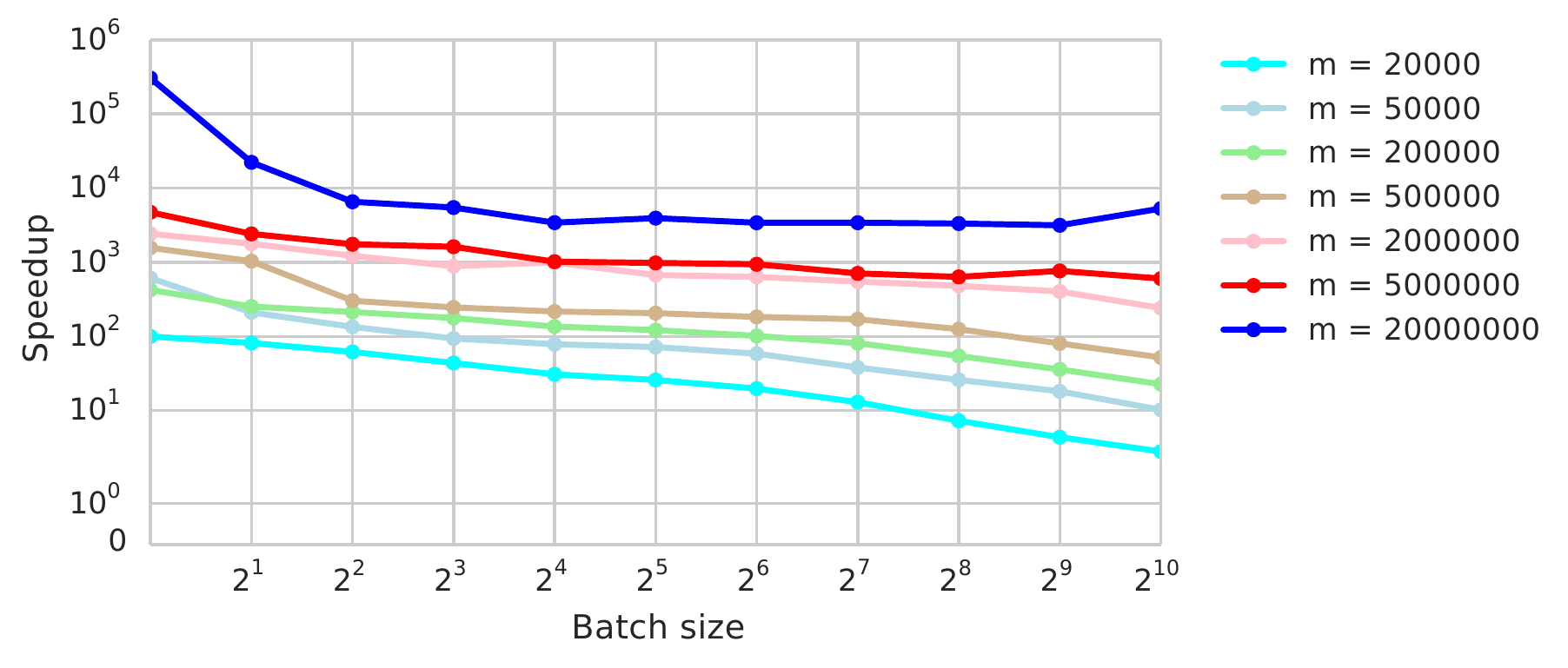}
\caption{Speedups of \da on \rk in hyperbolic unit-disk graphs.}
\label{fig:hyp}
\end{center}
% \vspace{-4ex}
\end{figure}
Figure~\ref{fig:unweighted} reports the speedups of \da on \rk in real graphs using real dynamics. Although some fluctuations can be noticed, the speedups tend to decrease as the batch size increases. We can attribute fluctuations to two main factors: First, different batches can affect areas of $G$ of varying sizes, influencing also the time required to update the SSSPs. Second, changes in the \vd approximation can require to sample new paths and therefore increase the running time of \da (and \daw). Nevertheless, \da is significantly faster than recomputation on all networks and for every tested batch size. Analogous results are reported in Figure~\ref{fig:random_unweighted} of the Appendix for random dynamics. Table~\ref{table:speedups} summarizes the running times of \da and its speedups on \rk with batches of size 1 and 1024 in unweighted graphs, under both real and random dynamics. Even on the larger graphs (\texttt{arXivCitations} and \texttt{englishWikipedia}) and on large batches, \da requires at most a few seconds to recompute the BC scores, whereas \rk requires about one hour for \texttt{englishWikipedia}. 
The results on weighted graphs are shown in Table~\ref{table:weighted} in Section~\ref{sec:plots} in the Appendix. In both real dynamics and random updates, the speedups vary between $\approx 50$ and $\approx 6 \cdot 10^3$ for single-edge updates and between $\approx 5$ and $\approx 75$ for batches of size 1024.
On hyperbolic graphs (Figure~\ref{fig:hyp}), the speedups of \da on \rk increase with the size of the graph. Table~\ref{table:hyperbolic} in the Appendix contains the running times and speedups on batches of 1 and 1024 edges. The speedups vary between $\approx 100$ and $\approx 3 \cdot 10^5$ for single-edge updates and between $\approx 3$ and $\approx 5 \cdot 10^3$ for batches of 1024 edges. 
The results show that \da and \daw are faster than recomputation with \rk in all the tested instances, even when large batches of 1024 edges are applied to the graph. With small batches, the algorithms are always orders of magnitude faster than \rk, often with running times of fraction of seconds or seconds compared to minutes or hours. Such high speedups are made possible by the efficient update of the sampled shortest paths, which limit the recomputation to the nodes that are actually affected by the batch. Also, processing the edges in batches, we avoid to update multiple times nodes that are affected by several edges of the batch.

%
%
%%%%%%%%%%%%%%%%%%%%%%%%%%%%%%%%%%%%%%%%%%%%%%%
\section{Conclusions}
%
% Brief summary
Betweenness is a widely used centrality measure, yet expensive if computed exactly.
In this paper we have presented the first fully-dynamic algorithms for betweenness approximation  
(for weighted and for unweighted undirected graphs). 
The consideration of edge deletions and disconnected graphs is made possible by the efficient solution
of several algorithmic subproblems (some of which may be of independent interest).
Now BC can be approximated with an error guarantee for a much wider 
set of dynamic real graphs compared to previous work. 
%
%Intermediate results of independent interest are a fully-dynamic batch algorithm for vertex diameter approximation, a fully-dynamic batch algorithm for BFS and a new vertex diameter approximation in weighted graphs.

Our experiments show significant speedups over the static algorithm \rk. In this context it is interesting
to remark that dynamic algorithms require to store additional memory and that this can be a limit to the size 
of the graphs they can be applied to. 
By not storing the predecessors in the shortest paths, we reduce the memory requirement from $\Theta(|E|)$ per 
sampled path to $\Theta(|V|)$ -- and are still often more than 100 times faster than \rk despite rebuilding the paths.
%\hmey{This looks like a better contribution section (due to quantifications).}
%\hmey{Given the high speedups, it seems that recomputing the predecessors does not hurt performance significantly,
%while it saves memory and thus allows scaling to larger inputs.}

Future work may include the transfer of our concepts to approximating other centrality measures in a fully-dynamic
manner, \eg closeness, and the extension to directed graphs, for which a good \vd approximation is the only obstacle.
Moreover, making the betweenness code run in parallel will further accelerate the
computations in practice.
Our implementation will be made available as part of a future release of the network analysis tool suite 
\href{http://networkit.iti.kit.edu}{\textit{NetworKit}}~\cite{DBLP:journals/corr/StaudtSM14}.

\bigskip

\renewcommand{\baselinestretch}{0.25}
\begin{scriptsize}
\textbf{Acknowledgements.}
This work is partially supported by DFG grant FINCA (ME-3619/3-1) within the SPP 1736 \emph{Algorithms for Big Data}.
We thank Moritz von Looz for providing the synthetic dynamic networks and the numerous contributors to 
the \textit{NetworKit} project. We also thank Matteo Riondato (Brown University) and anonymous reviewers for their constructive comments.
\end{scriptsize}
\renewcommand{\baselinestretch}{1.0}

%%%%% bib %%%%%%%%%
\bibliographystyle{abbrv}
\bibliography{references}
%%%%%%%%%%%%%%%%

\pagebreak
\appendix

\section{Description of the fully-dynamic algorithms}
\label{sec:pseudocodes}

\renewcommand{\floatpagefraction}{0.99}
\subsection{Dynamic \vd approximation}
Algorithm~\ref{algo:vd_approx} describes the initialization. Initially, we put all the nodes in a queue and compute an SSSP from the nodes we extract. During the SSSP search, we mark as visited all the nodes we scan. When extracting the nodes, we skip those that have already been visited: this avoids us to compute multiple approximations for the same component. 
In the update (Algorithm~\ref{algo:dyn_vd_approx}), we recompute the SSSPs and the \vd approximations with \upvd (or \upvdu). Since components might split, we might need to add \vd approximations for some new subcomponents, in addition to recompute the old ones. Also, if components merge, we can discard the superfluous approximations. To do this, we keep track, for each node, of the number of times it has been visited. Let $vis(v)$ denote this number for node $v$. Before the update, all the nodes are visited exactly once. While updating an SSSP from $s_i$, we increase (decrease) by one $vis(v)$ of the nodes $v$ that become reachable (unreachable) from $s_i$. This way we can skip the update of the SSSPs from nodes that have already been visited. After the update, for all nodes $v$ that have become unvisited ($vis(v)=0$), we compute a new \vd approximation from scratch.

\begin{algorithm}[h]
 \begin{footnotesize}
\LinesNumbered
\SetKwFunction{initDynamicSSSP}{initApprVD}
$U \leftarrow []$\;
\ForEach{node $v \in V$}
{
		$vis(v)\leftarrow 0$; insert $v$ into $U$\; \label{initvd:queue}
}
$i \leftarrow 1$\;
\While{$U\neq \emptyset$}
{\label{initvd:newsamples1}
	extract $s$ from $U$\;
	\If{$vis(s) = 0$}
	{
		 $s_i \leftarrow s$\;
		 \tcp{\initDynamicSSSP adds 1 to $vis(v)$ of the nodes it visits}
		$\tilde{\textsf{VD}}_i$ $\leftarrow$ \initDynamicSSSP{$G,s_i$}\;
		$i \leftarrow i+1$\;
	}
}\label{initvd:newsamples2}
$n_C \leftarrow i-1$\;
\vda $\leftarrow \max_{i=1,...,n_C} \tilde{\mathit{VD}}_i$\;
\Return{\vda}
\end{footnotesize}
\caption{Dynamic \vd approximation (initialization)}
\label{algo:vd_approx}
\end{algorithm}

\begin{algorithm}[h]
 \begin{footnotesize}
\LinesNumbered
\SetKwFunction{dynamicSSSP}{updateApprVD}
\SetKwFunction{initDynamicSSSP}{initApprVD}
$U \leftarrow []$\; 
\ForEach{$s_i$}
{\label{vdup:loop1}
	\If{vis($s_i )> 1$}
	{	
		remove $s_i$ and $\tilde{\mathit{VD}}_i$; decrease $n_C$\; \label{vdup:skip}
	}
	\Else{
		 \tcp{\dynamicSSSP updates $vis$, inserts all $v$ for which $vis(v)=0$ into $U$ and computes a \vd approximation $\tilde{\mathit{VD}}_i$}
		 $\tilde{\textsf{VD}}_i$ $\leftarrow$ \dynamicSSSP{$G,s_i$} \;
	}
}\label{vdup:loop2}
$i \leftarrow n_C$ \;
\While{$U\neq \emptyset$}
{\label{vdup:ext1}
	extract $s'$ from $U$\; \label{vdup:extract}
	\If{$vis(s') = 0$}
	{
		 $s'_i \leftarrow s'$\;
		$\tilde{\textsf{VD}}_i$ $\leftarrow$ \initDynamicSSSP{$G,s'_i$}\;
		$i \leftarrow i+1$; $n_C \leftarrow n_C+1$\;
	}
}\label{vdup:ext2}
reset $vis(v)$ to 1 for nodes $v$ such that $vis(v)>1$\;
\vda $\leftarrow \max_{i=1,...,n_C} \tilde{\mathit{VD}}_i$\;
\Return{\vda}
\end{footnotesize}
\caption{Dynamic \vd approximation (\textsf{updateApprVD})}
\label{algo:dyn_vd_approx}
\end{algorithm}

\subsection{Dynamic SSSP update for weighted graphs}
Algorithm~\ref{algo:weighted} describes the SSSP update for weighted graphs. The pseudocode updates both the \vd approximation for the connected component of $s$ and the number of shortest paths from $s$, so it basically includes both \sssp and \upvd.
Initially, we scan the edges $e =\{u,v\}$ in $\beta$ and, for each $e$, we insert the endpoint with greater distance from $s$ into $Q$ (w.l.o.g., let $v$ be such endpoint). The priority $p(v)$ of $v$ represents the \textit{candidate} new distance of $v$. This is the minimum between the $d(v)$ and $d(u)$ plus the weight of the edge $\{u, v\}$. Notice that we use the expression "insert $v$ into $Q$" for simplicity, but this can also mean update $p(v)$ if $v$ is already in $Q$ and the new priority is smaller than $p(v)$. 
When we extract a node $w$ from $Q$, we have two possibilities: (i) there is a path of length $p(w)$ and $p(w)$ is actually the new distance or (ii) there is no path of length $p(w)$ and the new distance is greater than $p(w)$. In the first case (Lines~\ref{update_weighted:part1-1}~-~\ref{update_weighted:part1-2}), we set $d(w)$ to $p(w)$ and insert the neighbors $z$ of $w$ such that $d(z)>d(w)+\omega(\{w,z\})$ into $Q$ (to check if new shorter paths to $z$ that go through $w$ exist). In the second case (Lines~\ref{update_weighted:part2-1}~-~\ref{update_weighted:part2-2}), we assume there is no shortest path between $s$ and $w$ anymore, setting $d(w)$ to $\infty$. We compute $p(w)$ as $\min_{\{v,w\}\in E} d(v) + \omega{(v,w)}$ (the new candidate distance for $w$) and insert $w$ into $Q$. Also its neighbors could have lost one (or all of) their old shortest paths, so we insert them into $Q$ as well. The update of $\underline{\omega}$ can be done while scanning the batch and of $d'$ and $d''$ when we update $d(w)$. When updating $d(w)$, we also increase $vis(w)$ in case the old $d(w)$ was equal to $\infty$ (\ie w has become reachable) and we decrease $vis(w)$ when we set $d(w)$ to $\infty$ (\ie $w$ has become unreachable). We update the number of shortest paths after updating $d(w)$, as the sum of the shortest paths of the predecessors of $w$ (Lines~\ref{update_weighted:sp1}~-~\ref{update_weighted:sp2}).
\begin{algorithm}[H]
\begin{footnotesize}
\LinesNumbered
\SetKwFunction{insertOrDecreaseKey}{insertOrDecreaseKey}
\SetKwFunction{extractMin}{extractMin}
$Q\leftarrow$ empty priority queue\;
\ForEach{$e=\{u,v\} \in \beta, d(u)<d(v)$}
{\label{update_weighted:init1}
	$Q \leftarrow$ \insertOrDecreaseKey{$v, p(v)=\min\{d(u)+\omega(\{u,v\}),d(w)\}$}\;
}\label{update_weighted:init2}
$\underline{\omega} \leftarrow \min\{\underline{\omega},\  \omega(e): e \in \beta\}$\;
\While{there are nodes in $Q$}
{
		$\{w, p(w)\} \leftarrow$ \extractMin{$Q$}\;
		$con(w) \leftarrow \min_{z : (z,w)\in E} d(z) + \omega(z,w)$\;
		\If{$con(w) = p(w)$}
		{\label{update_weighted:part1-1}
			update $d'$ and $d''$\;
			\If{$d(w) = \infty$} 
			{
				$vis(w) \leftarrow vis(w)+1$\; \label{update_weighted:visinc}
			}
			$d(w) \leftarrow p(w)$; $\sigma(w)\leftarrow 0$\;
			\ForEach{incident edge $(z,w)$}
			{
				\If{$d(w)=d(z)+\omega(z,w)$}
				{\label{update_weighted:sp1}
					$\sigma(w)\leftarrow \sigma(w)+\sigma(z)$\;
				}\label{update_weighted:sp2}
				\If{$d(z)\geq d(w)+\omega(z,w)$}
				{\label{update_w:neigh1}
					$Q \leftarrow $ \insertOrDecreaseKey{$z, p(z)=d(w)+\omega(z,w)$}\;
				}\label{update_w:neigh2}
			}
		}\label{update_weighted:part1-2}
		\Else
		{\label{update_weighted:part2-1}
			\If{$d(w) \neq \infty$}
			{
				$vis(w) \leftarrow vis(w)-1$\; \label{update_weighted:visdec}
				\If{vis(w)=0}{
				insert $w$ into $U$\;
				}
				\If{$con(w) \neq \infty$}
				{
					$Q \leftarrow $\insertOrDecreaseKey{$w, p(w)=con(w)$}\;
					\ForEach{incident edge $(z,w)$}
					{
						\If{$d(z)=d(w)+\omega(w,z)$}
						{
							$Q \leftarrow $\insertOrDecreaseKey{$z,p(z)=d(w)+\omega(z,w)$}\;
						}
					}
					$d(w)\leftarrow \infty$\;
				}
			}
		}\label{update_weighted:part2-2}
}
\end{footnotesize} 
\caption{SSSP update for weighted graphs (\textsf{updateSSSP-W})}
\label{algo:weighted}
\end{algorithm}
 \vspace{4ex}
\begin{algorithm}[H]
\begin{footnotesize}
\LinesNumbered
\textbf{Assumption:} $color(w)=white\quad \forall w \in V$\;
$Q[]\leftarrow$ array of empty queues\;
\ForEach{$e=\{u,v\} \in \beta, d(u)<d(v)$}
{\label{update:batch1}
	$k \leftarrow d(v)+1$; enqueue $v \rightarrow Q[k]$\;
}\label{update:batch2}
$k \leftarrow 1$\; \label{update:queues1}
\While{there are nodes in $Q[j],  j\geq k$}
{\label{lst2:line:start2}
	\While{$Q[k]\neq \emptyset$}
	{
		dequeue $w\leftarrow Q[k]$\;
		\textbf{if} $color(w)=black$ \textbf{then}			continue\; \label{update:black2}
	
		$con(w) \leftarrow \min_{z : (z,w)\in E} d(z) + 1$\;
		\If{$con(w) = k$}
		{\label{update:if}
			update $d'$ and $d''$\;
			\textbf{if} $d(w) = \infty$  \textbf{then}   $vis(w) \leftarrow vis(w)+1$\; \label{update:vis}
			$d(w) \leftarrow k$; $\sigma(w)\leftarrow 0$; $color(w) \leftarrow black$\; \label{update:black1}
			\ForEach{incident edge $(z,w)$}
			{
				\If{$d(w)=d(z)+1$}
				{\label{update:sp}
					$\sigma(w)\leftarrow \sigma(w)+\sigma(z)$\;
				}\label{update:sp2}
				\If{$d(z)>k$}
				{\label{update:neigh1}
					enqueue $z \rightarrow Q[k+1]$\;
				}\label{update:neigh2}
			}
		}\label{update:if2}
		\Else
		{\label{update:else}
			\If{$d(w) \neq \infty$}
			{
				$d(w)\leftarrow \infty$\;
				$vis(w) \leftarrow vis(w)-1$\; \label{update:vis3}
				\If{vis(w)=0}{
				insert $w$ into $U$\;
				}
				\If{$con(w) \neq \infty$}
				{
					enqueue $w \rightarrow Q[con(w)]$\;
					\ForEach{incident edge $(z,w)$}
					{
						\If{$d(z)>k$}
						{
							enqueue $z \rightarrow Q[k+1]$\;
						}
					}
				}
			}
		}\label{update:else2}
	}
	$k\leftarrow k+1$\;
}\label{update:queues2}
Set to white all the nodes that have been in $Q$\;
\end{footnotesize}
\caption{SSSP update for unweighted graphs (\textsf{updateSSSP-U})}
\label{algo:unweighted}
\end{algorithm}
\subsection{Dynamic SSSP update for unweighted graphs}
Algorithm~\ref{algo:unweighted} shows the pseudocode. As in Algorithm~\ref{algo:weighted}, we
first scan the batch (Lines~\ref{update:batch1} -~\ref{update:batch2}) and insert the nodes
in the queues. Then (Lines~\ref{update:queues1} -~\ref{update:queues2}), we scan the queues in order of increasing distance from $s$, in a fashion similar to that of a priority queue. 
In order not to insert a node in the queues multiple times, we use colors: Initially we set all the nodes to white and then we set a node $w$ to black only when we find the final distance of $w$ (\ie when we set $d(w)$ to $k$) (Line~\ref{update:black1}). Black nodes extracted from a queue are then skipped (Line~\ref{update:black2}). At the end we reset all nodes to white.

\subsection{Fully-dynamic BC approximation} 
\label{sec:fully_dyn_bc}
Similarly to \textsf{IA} and \textsf{IAW}, we replace the $r$ sampled paths between vertex pairs $(s,t)$ with new shortest paths between the same vertex pairs. However, here we also check whether $\tilde{\mathit{VD}}$ (and consequently the number $r$ of samples) has increased after the batch of edge updates. If so, we sample additional paths (computing new SSSPs from scratch) according to the new value of $r$. Instead of updating $\tilde{\mathit{VD}}$ and then the paths in two successive steps, we use the SSSPs from the $r$ source nodes $s$ to compute and update also $\tilde{\mathit{VD}}$, computing new SSSPs only for the components that are not covered by any of the source nodes. In the initialization (Algorithm~\ref{algo:init}), we first compute the $r$ SSSP, like in \rk (Lines~\ref{init:sampling1}~-~\ref{init:sampling2}). However, we also check which nodes have been visited, as in Algorithm~\ref{algo:vd_approx}. While we compute the $r$ SSSPs, in addition to the distances and number of shortest paths, we also compute a \vd approximation for each of the $r$ source nodes and increase $vis(v)$ of all the nodes we visit during the sources with \textsf{initSSSPVD} (Line~\ref{init:dynsssp}). Since it is possible that the $r$ shortest paths do not cover all the components of $G$, we compute an additional VD approximation for nodes in the unvisited components, like in Algorithm~\ref{algo:vd_approx} (Lines~\ref{init:newsamples1}~-~\ref{init:newsamples2}). Basically we can divide the SSSPs into two sets: the set $R$ of SSSPs used to compute the $r$ shortest paths and the set $R'$ of SSSPs used for a \vd approximation in the components that were not scanned by the initial $R$ SSSPs. We call $r'$ the number of the SSSPs in $R'$. The BC update after a batch is described in Algorithm~\ref{update}. First (Lines~\ref{up:update1} -~\ref{up:update2}), we recompute the shortest paths like in our incremental algorithms \textsf{IA} and \textsf{IAW}~\cite{DBLP:conf/alenex/BergaminiMS15}: we update the SSSPs from each source node $s$ in $R$ and we replace the old shortest path with a new one (subtracting $1/r$ to the nodes in the old shortest path and adding $1/r$ to those in the new shortest path). Notice that here we do not store the predecessors so we need to recompute them (Lines~\ref{up:pred1} and~\ref{up:pred2}).
Instead of using an incremental SSSP algorithm like in \textsf{IA}-\textsf{IAW}, here we use the fully-dynamic \textsf{updateSSSPVD} that updates also the \vd approximation and updates and keeps track of the nodes that become unvisited.
Then (Lines~\ref{up:ext1} -~\ref{up:ext2}), we add a new SSSP to $R'$ for each component that has become unvisited (by both $R$ and $R'$). After this, we have at least a \vd approximation for each component of $G$. We take the maximum over all these approximations and recompute the number of samples $r$ (Lines~\ref{up:recompute1} -~\ref{up:recompute2}). If $r$ has increased, we need to sample new paths and therefore new SSSPs to add to $R$. Finally, we normalize the BC scores, \ie we multiply them by the old value of $r$ divided by the new value of $r$ (Line~\ref{up:norm}).

\begin{algorithm}[h]
 \begin{footnotesize}
\LinesNumbered
\SetKwData{B}{$\tilde{c}_B$}\SetKwData{VD}{VD}
\SetKwFunction{getVertexDiameter}{getApproxVertexDiameter}
\SetKwFunction{sampleUniformNodePair}{sampleUniformNodePair}
\SetKwFunction{computeExtendedSSSP}{computeExtendedSSSP}
\SetKwFunction{initDynamicSSSP}{initApprVD}
\SetKwFunction{initSSSPVD}{initSSSPVD}
\ForEach{node $v \in V$}
{
	\B$(v)\leftarrow 0$; $vis(v) \leftarrow 0$\; \label{init:init}
}
\vda$\leftarrow$\getVertexDiameter{$G$}\; \label{init:sampling1}
$r \leftarrow (c/\epsilon^2) (\lfloor \log_2(\tilde{\mathit{VD}}-2)\rfloor +\ln(1/\delta))$\;

\For{$i \leftarrow 1$ \KwTo $r$}{
	\label{sampling1}
	$(s_i,t_i)\leftarrow$ \sampleUniformNodePair{$V$}\;
	$\tilde{\mathit{VD}}_i \leftarrow$ \initSSSPVD{$G,s_i,$}\; \label{init:dynsssp}
	$v \leftarrow t_i$\;\label{paths1}
	$p_{(i)}\leftarrow$ empty list\;
	$P_{s_i}(v) \leftarrow\{ z : \{z,v\}\in E \cap d_{s_i}(v) = d_{s_i}(z)+\omega(\{z,v\}) \}$\;
	\While{$P_{s_i}(v) \neq \{s_i\}$}
	{
		\mbox{sample $z \in P_{s_i}(v)$ with probability $\sigma_{s_i}(z)/\sigma_{s_i}(v)$}\;
		$\B(z)\leftarrow \B(z)+1/r$\;
		add $z\rightarrow p_{(i)}$;
		$v\leftarrow z$\;
		$P_{s_i}(v) \leftarrow\{ z : \{z,v\}\in E \cap d_{s_i}(v) = d_{s_i}(z)+\omega(\{z,v\}) \}$\;
	}\label{paths2}
}\label{sampling2} \label{init:sampling2}
$U \leftarrow V$\;
$i \leftarrow r+1$\;
\While{$U\neq \emptyset$}
{\label{init:newsamples1}
	extract $s'$ from $U$\;
	\If{$vis(s') = 0$}
	{
		 $s'_i \leftarrow s'$\;
		$\tilde{\mathit{VD}}_i \leftarrow$ \initDynamicSSSP{$G,s'_i$}\;
		$i \leftarrow i+1$\;
	}
}\label{init:newsamples2}
$r' \leftarrow r-i-1$\;
\Return{$\{(v,\B(v)):\: v\in V\}$}
\end{footnotesize}
\caption{BC initialization}
\label{algo:init}
\end{algorithm}

\begin{algorithm}[H]
\begin{footnotesize}
\LinesNumbered
\SetKwData{B}{$\tilde{c}_B$}
\SetKwFunction{dynamicSSSP}{updateApprVD}
\SetKwFunction{dynamicSSSPVD}{updateSSSPVD}
\SetKwFunction{initDynamicSSSP}{initApprVD}
$U \leftarrow []$\;
\For{$i \leftarrow 1$ \KwTo $r$}
{\label{up:update1}
	$d^{old}_i\leftarrow d_{s_i}(t_i)$\;
	$\sigma^{old}_i\leftarrow \sigma_{s_i}(t_i)$\;
	 \tcp{\dynamicSSSPVD updates $vis$, inserts all $v :\: vis(v)=0$ into $U$ and updates the \vd approximation}
	$\tilde{\mathit{VD}}_i \leftarrow$ \dynamicSSSPVD{$G, s_i, \beta$}\; \label{up:dynsssp}
	%\tcp{If the shortest paths between $s_i$ and $t_i$ have changed, we sample a new one}
	\tcp { we replace the shortest path between $s_i$ and $t_i$}
			%\tcp{We subtract $1/r$ from all nodes in the old shortest path and we add $1/r$ to all nodes in the new shortest path}
			\ForEach{$w \in p_{(i)}$}
			{
				\B($w$) $\leftarrow \B(w)-1/r$\; \label{up:sub}
			}
			$v \leftarrow t_i$\;
			$p_{(i)}\leftarrow$ empty list\;
			$P_{s_i}(v) \leftarrow\{ z : \{z,v\}\in E \cap d_{s_i}(v) = d_{s_i}(z)+\omega(\{z,v\}) \}$\;\label{up:pred1}
			\While{$P_{s_i}(v) \neq \{s_i\}$}
			{\label{up:new1}
				\mbox{sample $z \in P_{s_i}(v)$ with probability $=\sigma_{s_i}(z)/\sigma_{s_i}(v)$}\;
				$\B(z)\leftarrow \B(z)+1/r$\;
				add $z$ to $p_{(i)}$\;
				$v\leftarrow z$\;
				$P_{s_i}(v) \leftarrow\{ z : \{z,v\}\in E \cap d_{s_i}(v) = d_{s_i}(z)+\omega(\{z,v\}) \}$\;\label{up:pred2}
			}\label{up:new2}
}\label{up:update2}
\For{$i \leftarrow r+1$ \KwTo $r+r'$}
{\label{up:update3}
		$\tilde{\mathit{VD}}_i \leftarrow$ \dynamicSSSP{$G, s_i, \beta$}\;
}\label{up:update4}
$i \leftarrow r+r'+1$\;\label{ext1}
\While{$U\neq \emptyset$}
{\label{up:ext1}
	extract $s'$ from $U$\;
	\If{$vis(s') = 0$}
	{
		 $s'_i \leftarrow s'$\;
		$\tilde{\mathit{VD}}_i \leftarrow$ \initDynamicSSSP{$G, s'_i$}\;
		$i \leftarrow i+1$; $r' \leftarrow r'+1$\;
	}
}\label{up:ext2}
\tcp{compute the maximum over all the ${\mathit{VD}}_i$ computed by \dynamicSSSP}
\vda $\leftarrow \max_{i=1,...,r+r'} \tilde{\mathit{VD}}_i$\; \label{up:recompute1}
$r_{\text{new}} \leftarrow (c/\epsilon^2) (\lfloor \log_2(\tilde{\mathit{VD}}-2)\rfloor +\ln(1/\delta))$\; \label{up:recompute2}
\If{$r_{\text{new}}>r$}
{
	sample new paths\; \label{up:new_paths}
	\ForEach{$v \in V$}
	{
		$\B(v) \leftarrow \B(v) \cdot r/r_{\text{new}}$ \label{up:norm}
	}
$r \leftarrow r_{\text{new}}$\;
}\label{ext4}
\Return{$\{(v,\B(v)):\: v\in V\}$}
\end{footnotesize}
\caption{Dynamic update of BC approximation (\textsf{DA})}
\label{update}
\end{algorithm}

% \vspace{-20ex}
%%%%%%%%%%%%%%%%%%%%%%%%%%%%%%%%%%%%%%
\section{Omitted proofs}
\label{sec:proofs}
\subsection{Proof of Proposition~\ref{lem:vd2}}
\label{sub:proof_vd2}
\begin{proof}
To prove the first inequality, we can notice that $d^T(x,y) \geq d(x,y)$ for all $x,y \in V$, since all the edges of $T$ are contained in those of $G$. Also, since every edge has weight at least $\underline{\omega}$, $d(x,y) \geq (|p_{xy}|-1)\cdot \underline{\omega}$. Therefore, $d^T(x,y) \geq (|p_{xy}|-1)\cdot \underline{\omega}$, which can be rewritten as $|p_{xy}| \leq 1+ \frac{d^T(x,y)}{\underline{\omega}}$, for all $x,y \in V$. Thus, $\mathit{VD} = \max_{x,y} |p_{xy}| \leq 1+(\max_{x,y} d^T(x,y))/ \underline{\omega} \leq 1 + \frac{d^T(s,u)+d^T(s,v)}{\underline{\omega}} = 1 + \frac{d(s,u)+d(s,v)}{\underline{\omega}}$, where the last expression equals \vda by definition.

To prove the second inequality, we first notice that $d(s,u) \leq (|p_{su}|-1)\cdot \overline{\omega}$,
 and analogously $d(s,v) \leq (|p_{sv}|-1)\cdot \overline{\omega}$. Consequently, $\tilde{\mathit{VD}} \leq 1+(|p_{su}| + |p_{sv}|-2)\cdot \frac{\overline{\omega}}{\underline{\omega}} < 2 \cdot |p_{su}|\cdot \frac{\overline{\omega}}{\underline{\omega}}$, supposing that $|p_{su}| \geq |p_{sv}|$ without loss of generality. By definition of \vd, $|p_{su}| \leq \mathit{VD}$. Therefore, $\tilde{\mathit{VD}} < 2 \cdot \mathit{VD} \cdot \frac{\overline{\omega}}{\underline{\omega}}$.
 \qed
\end{proof}

\subsection{Proof of Lemma~\ref{thm:complexity}}
\label{sub:proof_complexity}
\begin{proof}
In the initial scan of the batch (Lines~\ref{update_weighted:init1}-\ref{update_weighted:init2}), we scan the nodes of the batch and insert the affected nodes into $Q$ (or update their value). This requires at most one heap operation (insert or decrease-key) for each element of $\beta$, therefore $O(|\beta| \log|\beta|)$ time.
When we extract a node $w$ from $Q$, we have two possibilities: (i) $con(w)=p(w)$ (Lines~\ref{update_weighted:part1-1}~-~\ref{update_weighted:part1-2}) or (ii) $con(w)>p(w)$ (Lines~\ref{update_weighted:part2-1}~-~\ref{update_weighted:part2-2}). In the first case, we scan the neighbors of $w$ and perform at most one heap operation for each of them (Lines~\ref{update_w:neigh1}~-~\ref{update_w:neigh2}). In the second case, this happens only if $d(w)\neq \infty$. Therefore, we can perform up to one heap operation per incident edge of $w$, for each extraction of $w$ in which $d(w)\neq \infty$ or $con(w)=p(w)$.
How many times can an affected node $w$ be extracted from $Q$ with $d(w) \neq \infty$ or $con(w)=p(w)$? If the first time we extract $w$, $con(w) $ is equal to $p(w)$ (case (i)), then the final value of $d(w)$ is reached and $w$ is not inserted into $Q$ anymore. If the first time we extract $w$, $con(w)$ is greater than $p(w)$ (case (ii)), $w$ can be inserted into the queue again. However, his distance is set to $\infty$ and therefore no additional operations are performed, until $d(w)$ becomes less than $\infty$. But this can happen only in case (i), after which $d(w)$ reaches its final value. To summarize, each affected node $w$ can be extracted from $Q$ with $d(w)\neq \infty$ or $con(w)=p(w)$ at most twice and, every time this happens, at most one heap operation per incident edge of $w$ is performed. The complexity is therefore $O(|\beta |\log |\beta | + ||A || \log ||A ||)$. \qed
\end{proof}

\subsection{Proof of Lemma~\ref{thm:complexity2}}
\label{sub:proof_complexity2}
\begin{proof}
The complexity of the initialization (Lines~\ref{update:batch1}~-~\ref{update:batch2}) of Algorithm~\ref{algo:unweighted}
is $O(|\beta|)$, as we have to scan the batch. In the main loop (Lines~\ref{update:queues1}~-~\ref{update:queues2}), we scan all the list
of queues, whose final size is $d_{\max}$. 
Every time we extract a node $w$ whose color is not black, we scan all the 
incident edges, therefore this operation is linear in the number of neighbors of $w$. 
If the first time we extract $w$ (say at level $k$) $con(w) $ is equal to $k$, then $w$ will be set to black and will not be scanned anymore.
If the first time we extract $w$, $con(w)$ is instead greater than $k$, $w$ will be inserted into the queue at level $con(w)$ (if $con(w)<\infty$). 
Also, other inconsistent neighbors of $w$ might insert $w$ in one of the queues. However, after the first time $w$ is extracted,
its distance is set to $\infty$, so its neighbors will not be scanned unless $con(w)=k$, in which case 
they will be scanned again, but for the last time, since $w$ will be set to black. To summarize, each affected node and its neighbors can 
be scanned at most twice. The complexity of the
algorithm is therefore $O(|\beta|+\|A\|+d_{\max})$. \qed
\end{proof}

\subsection{Correctness of Algorithm~\ref{algo:vd_approx} and Algorithm~\ref{algo:dyn_vd_approx}}
\begin{lemma}
\label{thm:vd_correctness}
At the end of Algorithm~\ref{algo:vd_approx}, $vis(v)=1,\ \forall v \in V$ and exactly one \vd approximation is computed for each connected component of $G$.
\end{lemma}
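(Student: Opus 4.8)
The plan is to establish the two claims separately, both resting on a single structural property of the subroutine \textsf{initApprVD}: an SSSP search from a source $s_i$ reaches \emph{exactly} the nodes of the connected component $C$ containing $s_i$, visiting each of them once, and therefore increments $vis(v)$ by exactly one for every $v \in C$ while leaving $vis$ unchanged for all nodes outside $C$. First I would record the bookkeeping facts that follow immediately from the pseudocode: every node is inserted into $U$ exactly once during the initialization loop (Line~\ref{initvd:queue}), hence is extracted from $U$ exactly once, and a search is launched from an extracted node $s$ only when $vis(s)=0$ at the moment of extraction.

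The core of the argument is to show that \textsf{initApprVD} is invoked exactly once per connected component. Fix a component $C$ and let $s$ be the first node of $C$ to be extracted from $U$. I would show that $vis(s)=0$ at that instant: every search performed before $s$ was extracted started from some source $s_j$ that had itself been extracted earlier, hence $s_j \notin C$ by the minimality of $s$; by the structural property such a search never touches $C$, so no node of $C$ — in particular $s$ — had its counter incremented. Consequently the test $vis(s)=0$ succeeds, a search from $s$ is launched, and it sets $vis(v)=1$ for all $v \in C$. For every later extraction of a node $s' \in C$ we then have $vis(s')=1 \neq 0$, so the search is skipped. This yields exactly one approximation $\tilde{\mathit{VD}}_i$ per component.

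The claim that $vis(v)=1$ for all $v \in V$ then follows by a double-counting argument: each $v$ lies in a unique component $C$, exactly one search is run on $C$, and that single search increments $vis(v)$ precisely once, while no search on any other component affects it; hence $vis(v)$ ends at $1$. I expect the only delicate point to be the minimality step — verifying that when the first node of a component is dequeued its counter is still zero — which is exactly where the containment property of SSSP within a single connected component (searches never cross component boundaries) is used; once that is in place, the remainder is routine bookkeeping over the single pass through $U$.
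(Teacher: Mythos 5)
Your proposal is correct and rests on exactly the same two facts as the paper's own proof: an SSSP search from a source visits precisely the nodes of its connected component (incrementing each $vis$ counter once), and a node whose counter is already positive at extraction is never made a source. The only difference is the direction of deduction --- the paper first proves $vis(v)=1$ for every node (lower bound because $v$ is either visited before its extraction or becomes a source itself; upper bound by the contradiction that a second source in the same component would already have been visited by the first) and then reads off one approximation per component, whereas you establish one-source-per-component directly via the first-extracted-node minimality argument and derive $vis(v)=1$ from it; this reordering is cosmetic, not a different method.
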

\begin{proof}
Let $v$ be any node. Then $v$ must be scanned by \emph{at least} one source node $s_i$ in the while loop (Lines~\ref{initvd:newsamples1}~-~\ref{initvd:newsamples2}): In fact, either $v$ is visited by some $s_i$ before $v$ is extracted from $U$, or $vis(v)=0$ at the moment of the extraction and $v$ becomes a source node itself. This implies that $vis(v) \geq 1, \ \forall v \in V$.
On the other hand, $vis(v)$ cannot be greater than $1$. In fact, let us assume by contradiction that $vis(v)>1$. This means that there are at least two source nodes $s_i$ and $s_j$ ($i<j$, w.l.o.g.) that are in the same connected component as $v$. Then also $s_i$ and $s_j$ are in the same connected component and $s_j$ is visited during the SSSP search from $s_i$. Then $vis(s_j)=1$ before $s_j$ is extracted from $U$ and $s_j$ cannot be a source node.
Therefore, $vis(v)$ is exactly equal to 1 for each $v \in V$, which means that exactly one \vd approximation is computed for the connected component of each $v$, \ie exactly one \vd approximation is computed for each connected component of $G$.
 \qed
\end{proof}

\begin{lemma}
\label{thm:dyn_vd_correctness}
Let $C' = \{C'_1,...,C'_{n'_c}\}$ be the set of connected components of $G$ after the update. Algorithm~\ref{algo:dyn_vd_approx} updates or computes exactly one \vd approximation for each $C'_i \in C'$.
\end{lemma}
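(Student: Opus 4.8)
The plan is to prove the statement in the same two-sided style as Lemma~\ref{thm:vd_correctness}: I would show that, after Algorithm~\ref{algo:dyn_vd_approx} terminates, each new component $C'_i$ contains \emph{at least one} and \emph{at most one} node that is kept as a source, where a kept source is either an old source $s_j$ refreshed by \textsf{updateApprVD} or a new source created by \textsf{initApprVD} in the while loop. Since every kept source maintains exactly one approximation $\tilde{\mathit{VD}}_j$ and sources in different components maintain approximations for different components, this yields the claim. The backbone of the argument is a bookkeeping invariant for the counter $vis$: by Lemma~\ref{thm:vd_correctness} I may start from $vis(v)=1$ for all $v$ and from exactly one source per old component, and then track how each call to \textsf{updateApprVD} from $s_i$ changes $vis(v)$ by $+1$ for nodes that become reachable from $s_i$ (distance $\infty$ turning finite) and by $-1$ for nodes that become unreachable, exactly as specified in Algorithms~\ref{algo:weighted} and~\ref{algo:unweighted}.

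For the \emph{at most one} direction I would argue per type of kept source. Two new sources cannot lie in the same $C'_i$: when the while loop promotes the first extracted node $s'$ of a component, \textsf{initApprVD} visits all nodes reachable from $s'$ and raises their $vis$ to at least $1$, so every later node of that component is discarded by the test $vis(s')=0$ (this is precisely the argument of Lemma~\ref{thm:vd_correctness}). For old sources, I would show that among all old sources ending up in the same $C'_i$, only the smallest-indexed one survives: refreshing it recomputes its SSSP over all of $C'_i$ and thereby increments $vis$ of every other old source of $C'_i$ (each was in a different old component and is hence newly reachable), so each later fails the test on Line~\ref{vdup:skip} and is removed. Finally, a refreshed old source and a new source cannot coexist in $C'_i$, since a surviving old source keeps every node of $C'_i$ at $vis\ge 1$ and thus prevents any node of $C'_i$ from entering $U$ with $vis=0$. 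The concluding reset of $vis$ to $1$ for nodes with $vis>1$ does not change this count of kept sources.

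For the \emph{at least one} direction I would distinguish whether $C'_i$ contains an old source. If it does, the smallest-indexed old source in $C'_i$ is refreshed (never removed, as above) and supplies the approximation. If $C'_i$ contains no old source, it must have split off from some old component; I would then show that every node $v\in C'_i$ becomes unreachable from its former source during that source's update, is decremented to $vis(v)=0$, is therefore inserted into $U$, and so the while loop eventually promotes a node of $C'_i$ via \textsf{initApprVD}.

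The hard part will be exactly this last case, \ie the simultaneous interaction of merges and splits. The delicate point is that a node $v$ of a source-less new component $C'_i$ is inserted into $U$ only if the \emph{update from its former source was actually executed}; if that former source was itself absorbed into another component by the same batch and hence discarded on Line~\ref{vdup:skip} before its own update could detect that $v$ has split away, then $v$ might retain $vis(v)=1$ and never enter $U$. I would therefore need to establish the key invariant that, at the instant an old source $s_j$ is discarded, every node that has left $s_j$'s old region has \emph{already} been decremented and, if orphaned, inserted into $U$ by an earlier refreshed source. Making this invariant hold is where I would have to reason carefully about the processing order of the sources and about how the $vis$ updates of \textsf{updateApprVD} propagate across merged regions; I expect the entire proof to hinge on getting this step right.
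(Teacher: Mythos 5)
Your two\-/sided counting framework is sound as far as it goes, and its pure cases line up with the paper's own proof: the paper argues by a three\-/case analysis --- (i) a component survives unchanged, (ii) two components merge, (iii) one component splits --- and your ``at most one'' direction is essentially case (ii) (the smallest\-/indexed surviving source absorbs the others via the $vis$ increments), while your ``at least one'' direction for components containing an old source, and for pure splits, is case (iii). If anything, your treatment is more careful than the paper's, e.g.\ in explicitly ruling out the coexistence of a refreshed old source and a freshly promoted source in the same new component. The paper then simply asserts that ``it is straightforward to see that the proof holds also for combinations of these cases.''

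The step you flag at the end, however, is a genuine gap, and it cannot be closed by a cleverer ordering argument: the invariant you would need is false for Algorithm~\ref{algo:dyn_vd_approx} as written --- and the paper's proof does not close it either, since the problematic situation is exactly the merge\-/plus\-/split combination it dismisses. Concretely, take old components $C_1$ (source $s_1$) and $C_2$ (source $s_2$), and a batch that simultaneously attaches $C_2 \setminus D$ to $C_1$ and cuts $D$ off. If the loop of Lines~\ref{vdup:loop1}--\ref{vdup:loop2} processes $s_1$ first, the update from $s_1$ makes the nodes of $C_2 \setminus D$ newly reachable and increments their counters (Line~\ref{update_weighted:visinc} of Algorithm~\ref{algo:weighted}), in particular $vis(s_2)=2$; so $s_2$ is discarded at Line~\ref{vdup:skip} and its update never runs. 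But the decrement that would orphan the nodes of $D$ can only be performed by that skipped update: $D$ is unreachable from $s_1$ both before and after the batch, so no refreshed source ever touches it. Every $v \in D$ thus retains $vis(v)=1$, never enters $U$, the loop of Lines~\ref{vdup:ext1}--\ref{vdup:ext2} promotes no node of $D$, and the update terminates with no \vd approximation for $D$ --- and since the final reset restores $vis \equiv 1$, the omission persists in all later updates. Which of $s_1,s_2$ gets absorbed depends on the batch, so no fixed iteration order over the sources avoids this. The lemma can only be rescued by strengthening the algorithm so that discarding a source still performs the bookkeeping its update would have done (e.g.\ run \textsf{updateApprVD} from $s_i$ before removing it, or decrement $vis$ over $s_i$'s formerly reachable set and insert the resulting $vis=0$ nodes into $U$). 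Your instinct that the entire proof hinges on this step is exactly right; as the pseudocode stands, the step fails rather than merely being hard.
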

\begin{proof}
Let $C = \{C_1,...,C_{n_c}\}$ be the set of connected components before the update. Let us consider three basic cases (then it is straightforward to see that the proof holds also for combinations of these cases): (i) $C_i \in C$ is also a component of $C'$, (ii) $C_i \in C$ and $C_j \in C$ merge into one component $C'_k$ of $C'$, (iii) $C_i \in C$ splits into two components $C'_j$ and $C'_k$ of $C'$. In case (i), the \vd approximation of $C_i$ is updated exactly once in the for loop (Lines~\ref{vdup:loop1}~-~\ref{vdup:loop2}). In case (ii), (assuming $i<j$, w.l.o.g.) the \vd approximation of $C'_k$ is updated in the for loop from the source node $s_i \in C_i$. In its SSSP search, $s_i$ visits also $s_j \in C_j$, increasing $vis(s_j)$. Therefore, $s_j$ is skipped and exactly one \vd approximation is computed for $C'_k$. In case (iii), the source node $s_i \in C_i$ belongs to one of the components (say $C'_j$) after the update. During the for loop, the \vd approximation is computed for $C'_j$ via $s_i$. Also, for all the nodes $v$ in $C'_k$, $vis(v)$ is set to 0 and $v$ is inserted into $U$. Then some source node $s'_k \in C'_k$ must be extracted from $U$ in Line~\ref{vdup:extract} and a \vd approximation is computed for $C'_k$. Since all the nodes in $C'_k$ are set to visited during the search, no other \vd approximations are computed for $C'_k$.
 \qed
\end{proof}

\subsection{Proof of Theorem~\ref{thm:complexity_vd}}
\label{sub:proof_complexity_vd}
\begin{proof}
In the first part (Lines~\ref{vdup:loop1}~-~\ref{vdup:loop2} of Algorithm~\ref{algo:dyn_vd_approx}), we update an SSSP with \upvd or \upvdu for each source node $s_i$ such that $vis(s_i)$ is not greater than 1. Therefore the complexity of the first part is $O(n_c\cdot |\beta |\log |\beta |+\sum_{i=1}^{n_c}||A^{(i)}|| \log ||A^{(i)}||)$ in weighted graphs and $O(n_c\cdot |\beta |+\sum_{i=1}^{n_c}||A^{(i)}||+d^{(i)}_{max})$ in unweighted, for Lemmas~\ref{thm:complexity} and~\ref{thm:complexity2}. Only some of the affected nodes (those whose distance from a source node becomes equal to $\infty$) are inserted into the queue $U$. Therefore the cost of scanning $U$ in Lines~\ref{vdup:ext1}~-~\ref{vdup:ext2} is $O(\sum_{i=1}^{n_c}||A^{(i)}||)$. New SSSP searches are computed for new components that are not covered by the existing source nodes anymore. However, also such searches involve only the affected nodes and each affected node (and its incident edges) is scanned at most once during the search. Therefore, the total cost is $O(n_c\cdot |\beta |\log |\beta |+\sum_{i=1}^{n_c}||A^{(i)}|| \log ||A^{(i)}||)$ for weighted graphs and $O(n_c\cdot |\beta |+\sum_{i=1}^{n_c}||A^{(i)}||+d^{(i)}_{max})$ for unweighted graphs. \qed
\end{proof}

\subsection{Proof of Theorem~\ref{thm:correctness_bc}}
\label{sub:proof_correctness_bc}
\begin{proof}
Let $G$ be the old graph and $G'$ the modified graph after the batch of edge updates. Let $p'_{xy}$ be a shortest path of $G'$ between nodes $x$ and
$y$. To prove the theoretical guarantee, we need to prove that the probability of any sampled path $p'_{(i)}$ is equal to $p'_{xy}$ (\ie that the algorithms adds $1/r'$ to the nodes in $p'_{xy}$) is $\frac{1}{n(n-1)}\frac{1}{\sigma'_{x}(y)}$.
Algorithm~\ref{update} replaces the first $r$ shortest paths with other shortest paths $p'_{(1)},...,p'_{(r)}$ between the same node pairs (Lines~\ref{up:new1}~-~\ref{up:new2}) using Algorithm 4.1 of~\cite{DBLP:conf/alenex/BergaminiMS15}, for which it was already proven that $\Pr(p'_{(k)}=p'_{xy})=\frac{1}{n(n-1)}\frac{1}{\sigma'_{x}(y)}$~\cite[Theorem 4.1]{DBLP:conf/alenex/BergaminiMS15}. The additional $\Delta r$ shortest paths (Line~\ref{up:new_paths}) are recomputed from scratch with \rk, therefore also in this case $\Pr(p'_{(k)}=p'_{xy})=\frac{1}{n(n-1)}\frac{1}{\sigma'_{x}(y)}$ by Lemma 7 of~\cite{DBLP:conf/wsdm/RiondatoK14}.
 \qed
\end{proof}

\subsection{Proof of Theorem~\ref{thm:complexity_bc}}
\label{sub:proof_complexity_bc}
\begin{proof}
Let $\Delta r'$ be the difference between the values of $r'$ before and after the batch. Let us start from the simplest case: the graph $G$ is such that there is (before and after the update) one sample in each component and \vd does not increase after the update. This case includes, for example, connected graphs subject to a batch of only edge insertions, or any batch that neither splits the graph into more components nor increases \vd. In this case, $\Delta r=0$ and $\Delta r'=0$ and we only need to update the $r$ old shortest paths. Then, the total complexity is $O(r\cdot |\beta | + \sum_{i=1}^r (||A^{(i)}||+d_{\max}^{(i)}))$, where $A^{(i)}$ is the set of nodes affected in the $i$th SSSP, and $d_{\max}^{(i)}$ is the maximum distance in the $i$th SSSP. In general graphs, we might need to sample new paths for the betweenness approximation ($\Delta r>0$) and/or sample paths in new components that are not covered by any of the sampled paths ($\Delta r'>0$). Then, the complexity for the betweenness approximation update is $O(r\cdot |\beta | + \sum_{i=1}^r (||A^{(i)}||+d_{\max}^{(i)})) + O(\Delta r (|V|+|E|))$. The \vd update requires $O(r'\cdot |\beta | + \sum_{i=1}^{r'} (||A^{(i)}||+d_{\max}^{(i)}))$ to update the \vd approximation in the already covered components and $\sum_{i=1}^{\Delta r}(|V_i|+|E_i|)$ for the new ones, where $V_i$ and $E_i$ are nodes and edges of the $i$th component, respectively. \qed
\end{proof}

%%%%%%%%%%%%%%%%%%%%%%%%%%%%%%%%%%%%%%%%%%%
\section{Additional Experimental Results}
 \vspace{-4ex}
\label{sec:plots}
\begin{figure}[h]
 \vspace{-2ex}
\begin{center}
\includegraphics[width = 0.8\textwidth]{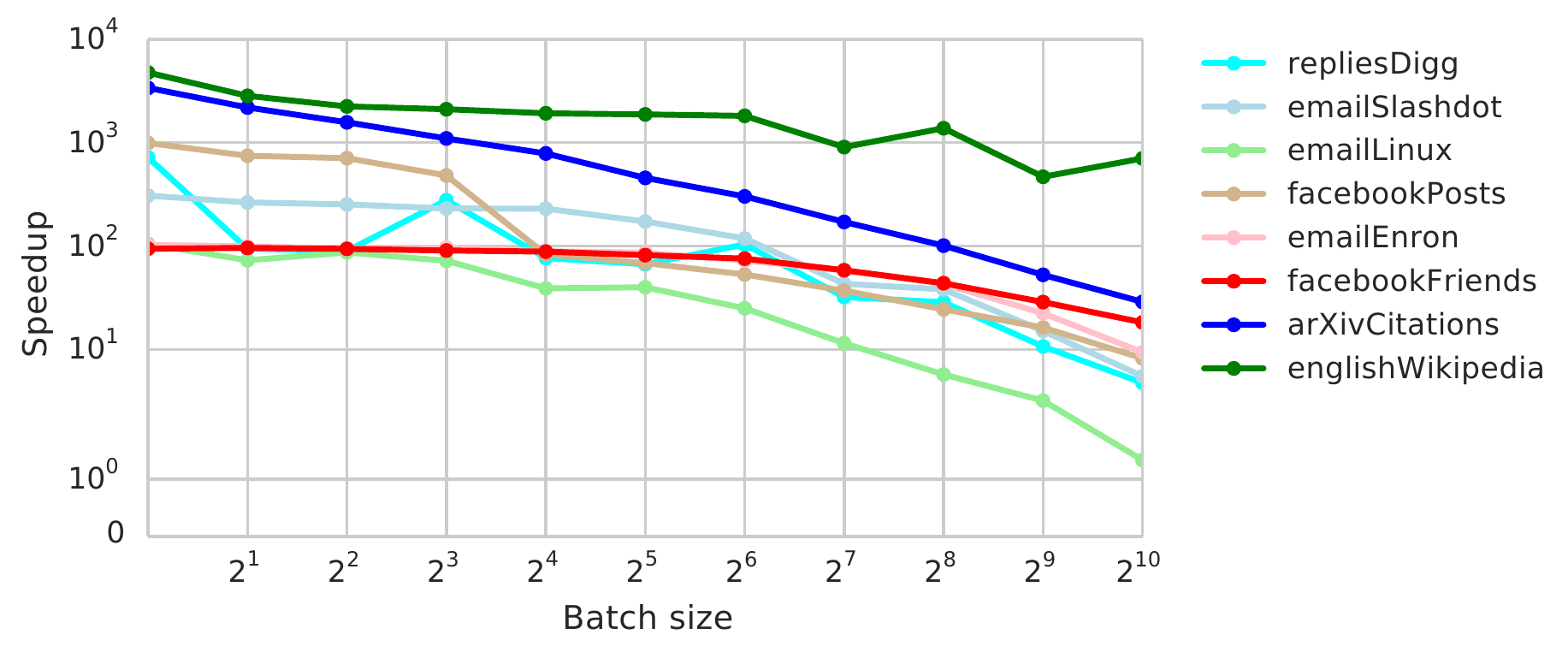}
\caption{Speedups on \rk in real unweighted graphs under random updates.}
\label{fig:random_unweighted}
\end{center}
 \vspace{-2ex}
\end{figure}
  \vspace{-2ex}
 \begin{table*}[h]
  \vspace{-4ex}
\begin{center}
\begin{scriptsize}
  \begin{tabular}{  l | r | r | r | r | r | r | r | r |}
\cline{2-9}  
 & \multicolumn{4}{  c |  }{Real}& \multicolumn{4}{  c |  }{Random} \\ \cline{2-9}
 &\multicolumn{2}{  c |  }{Time [s]}& \multicolumn{2}{  c |  }{Speedups} &\multicolumn{2}{  c |  }{Time [s]}& \multicolumn{2}{  c |  }{Speedups} \\ \cline{1-9}
\multicolumn{1}{| l|}{Graph} & $|\beta|=1$ & $|\beta|=1024$ & $|\beta|=1$ & $|\beta|=1024$  & $|\beta|=1$ & $|\beta|=1024$ & $|\beta|=1$ & $|\beta|=1024$ \\\cline{1-9}
\multicolumn{1}{| l|}{\texttt{repliesDigg}} 		& 0.053	& 3.032	& 605.18	& 14.24	& 0.049	& 3.046	& 658.19	& 14.17	\\ \cline{1-9}		
\multicolumn{1}{| l|}{\texttt{emailSlashdot}} 	& 0.790	& 5.387	& 50.81	& 16.12	& 0.716	& 5.866	& 56.00	& 14.81	\\ \cline{1-9}		
\multicolumn{1}{| l|}{\texttt{emailLinux}} 		& 0.324	& 24.816	& 5780.49	& 75.40	& 0.344	& 24.857	& 5454.10	& 75.28	\\ \cline{1-9}				
\multicolumn{1}{| l|}{\texttt{facebookPosts}} 	& 0.029	& 6.672	&  2863.83& 11.42	& 0.029	& 6.534	& 2910.33	& 11.66	\\ \cline{1-9}	
\multicolumn{1}{| l|}{\texttt{emailEnron}} 		& 0.050	& 9.926	& 3486.99	& 24.91	& 0.046	& 50.425	& 3762.09	& 4.90	\\ \cline{1-9}					
  \end{tabular}
  \end{scriptsize}
\end{center}
  \caption{Times and speedups of \daw on \rk in weighted real graphs under real dynamics and random updates, for batch sizes of 1 and 1024.}
  \label{table:weighted}
  \vspace{-4ex}
\end{table*} 
  \vspace{-2ex}
 \begin{table*}[h]
   \vspace{-4ex}
\begin{center}
\begin{scriptsize}
  \begin{tabular}{  l | r | r | r | r |}
\cline{2-5}  
 & \multicolumn{4}{  c |  }{Hyperbolic} \\ \cline{2-5}
 &\multicolumn{2}{  c |  }{Time [s]}& \multicolumn{2}{  c |  }{Speedups} \\ \cline{1-5}
\multicolumn{1}{| l|}{Number of edges} & $|\beta|=1$ & $|\beta|=1024$ & $|\beta|=1$ & $|\beta|=1024$ \\\cline{1-5}
\multicolumn{1}{| l|}{$m = 20000$} 				& 0.005	& 0.195	& 99.83	& 2.79	\\ \cline{1-5}		
\multicolumn{1}{| l|}{$m = 50000$} 				& 0.002	& 0.152	& 611.17	& 10.21	\\ \cline{1-5}
\multicolumn{1}{| l|}{$m = 200000$} 				& 0.015	& 0.288	& 422.81	& 22.64	\\ \cline{1-5}		
\multicolumn{1}{| l|}{$m = 500000$} 				& 0.012	& 0.339	& 1565.12	& 51.97	\\ \cline{1-5}		
\multicolumn{1}{| l|}{$m = 2000000$} 			& 0.049	& 0.498	& 2419.81	& 241.17	\\ \cline{1-5}		
\multicolumn{1}{| l|}{$m = 5000000$} 			& 0.083	& 0.660	& 4716.84	& 601.85	\\ \cline{1-5}			
\multicolumn{1}{| l|}{$m = 20000000$} 			& 0.006	& 0.401	& 304338.86	& 5296.78	\\ \cline{1-5}							
  \end{tabular}
  \end{scriptsize}
\end{center}
  \caption{Times and speedups of \da on \rk in hyperbolic unit-disk graphs, for batch sizes of 1 and 1024.}
  \label{table:hyperbolic}
\end{table*} 

\end{document}